\newcommand{\ls}{\leqslant}
\newcommand{\gs}{\geqslant}
\newcommand{\R}{\mathbb R}
\DeclareRobustCommand{\rchi}{{\mathpalette\irchi\relax}}
\newcommand{\irchi}[2]{\raisebox{\depth}{$#1\chi$}} % inner command, used by \rchi
\newcommand{\dd}{\, \mathrm{d}}
\renewcommand{\d}{\mathrm{d}}
\newcommand{\eps}{\varepsilon}
\newcommand{\tr}{\operatorname{Tr}}
\newcommand{\BD}{\operatorname{BD}}
\newcommand{\TD}{\operatorname{TD}}
\newcommand{\Id}{\operatorname{I}}
\renewcommand{\div}{\operatorname{div}}
\newcommand{\curl}{\operatorname{curl}}
\newcommand{\scal}[2]{\left \langle #1,#2 \right \rangle}
\newcommand{\gsym}{\dot{\bs \gamma}}
\newtheorem{thm}{Theorem}
\newtheorem{lem}{Lemma}
\newtheorem{cor}{Corollary}
\newcommand{\bs}{\boldsymbol}
\def\ps@pprintTitle{%
 \let\@oddhead\@empty
 \let\@evenhead\@empty
 \let\@oddfoot\@empty
 \let\@evenfoot\@oddfoot
}
\begin{document}

\begin{frontmatter}

\title{Computing the Yield Limit in Three-dimensional Flows of a Yield Stress Fluid About a Settling Particle}
%\tnotetext[mytitlenote]{Fully documented templates are available in the elsarticle package on \href{http://www.ctan.org/tex-archive/macros/latex/contrib/elsarticle}{CTAN}.}

%% Group authors per affiliation:
\author[Linz]{Jos\'{e} A. Iglesias\corref{correspondingauthor}}
\author[Vienna]{Gwenael Mercier}
\author[mainaddress]{Emad Chaparian}
\author[mainaddress]{Ian A. Frigaard}
\address[Linz]{Johann Radon Institute for Computational and Applied Mathematics (RICAM), Austrian Academy of Sciences, Linz, Austria}
\address[Vienna]{Faculty of Mathematics, University of Vienna, Vienna, Austria}
\address[mainaddress]{Department of Mechanical Engineering \& Department of Mathematics, University of British Columbia, Vancouver, BC, V6T 1Z4, Canada}

\cortext[correspondingauthor]{Corresponding author: jose.iglesias@ricam.oeaw.ac.at}

\begin{abstract}
Calculating the yield limit {$Y_c$ (the critical ratio of the yield stress to the {\it driving} stress)}, of a viscoplastic fluid flow is a challenging problem, often needing iteration in the rheological parameters to approach this limit, as well as accurate computations that account properly for the yield stress and potentially adaptive meshing. For particle settling flows, in recent years calculating $Y_c$  has been accomplished analytically for many antiplane shear flow configurations and also computationally for many geometries, under either two dimensional (2D) or axisymmetric flow restrictions. Here we approach the problem of 3D particle settling and how to compute the yield limit \textbf{directly}, i.e.~without iteratively changing the rheology to approach the yield limit. The presented approach develops tools from optimization theory, taking advantage of the fact that $Y_c$ is defined via a minimization problem. We recast this minimization in terms of primal and dual variational problems, develop the necessary theory and finally implement a basic but workable algorithm. We benchmark results against accurate axisymmetric flow computations for cylinders and ellipsoids{, computed using adaptive meshing. We also make comparisons of accuracy in calculating $Y_c$ on comparable fixed meshes.} This demonstrates the feasibility {and benefits} of directly computing $Y_c$ in multiple dimensions. {Lastly, we present some sample computations for complex 3D particle shapes.}
\end{abstract}

\begin{keyword}
Viscoplastic fluids, particles, yield limit, computation, optimization
\end{keyword}

\end{frontmatter}

%\linenumbers

\section{Introduction}
\label{sec:intro}

The ability to resist a shear stress at rest is the key qualitative feature of a yield stress fluid. This feature arises in the earliest classical flows studied with simple yield stress fluid models: for sufficiently large yield stresses: dense particles are statically suspended in fluid, a layer of  yield stress fluid on a vertical surface does not flow, a stress-controlled vane viscometer cannot turn, etc. This competition is captured physically as the dimensionless balance $Y$, between the yield stress and whatever is the driving force (stress) of the flow, e.g.~pressure drop in flow along a pipe, buoyancy in settling of a particle or in bubble rise. Many 1D flows allow analysis and explicit determination of limiting values of the yield number $Y$, such that there is no flow for $Y \ge Y_c$. General determination of $Y_c$ remains elusive although $Y_c$ has been given a  formal mathematical definition for various classes of flow e.g.~\cite{putz2010,karimfazli2016}, the roots of which go back to the 1965 analysis of \cite{MosMia1965} for anti-plane shear flows.

In this paper our concern is computation of $Y_c$ for 3D settling particles and more specifically \textbf{direct} computation. To explain direct, we first discuss more general particle computations for yield stress fluids.
The best known work here by Beris \emph{et al.}~\cite{beris1985} considered flow around a sphere and was the first to convincingly frame the theoretical question of a limiting $Y$ and to calculate it. There have since been extensive calculations of flows around isolated axisymmetric and 2D particles e.g.~\cite{beaulne1997,zisis2002,degloDeBesses2003,degloDeBesses2004,mitsoulis2004a,tokpavi2008,jossic2009,putz2010,chaparian2017yield,chaparian2017cloaking}. For Stokes flows we must distinguish 2 different formulations for these problems: a resistance problem and a mobility problem. Simply put, a resistance problem imposes motion on a particle and computes the force \& moment; a mobility problem imposes the forces/moments and computes velocity. Either formulation can be used to estimate $Y_c$. For the mobility formulation, for fixed imposed forces/moments one increases the yield stress incrementally until a zero velocity is computed. For the resistance formulation an asymptotic approach is needed, via rescaling of variables, again with successive computations; see e.g.~\cite{chaparian2017yield}.

The need to compute a flow at successively large values of a dimensionless parameter makes these methods iterative or indirect, in terms of finding $Y_c$. The limiting flows are generally characterized by diminishing relevance of the viscous dissipation with respect to the plastic dissipation. This leads naturally to the question of whether we can simply neglect the viscous aspect of the flow \emph{a priori}, whether such a computation is viable and whether the solution allows us to compute the limit $Y_c$ \textbf{directly}. The motivation for this is partly intellectual and partly pragmatic. Calculation of $Y_c$ for 2D and axisymmetric flows using indirect methods has been most convincing when an augmented Lagrangian method has been used and when some form of mesh adaptivity is implemented, so that the mesh can deform to relevant geometric features of the limiting flow. This approach was originally developed by Saramito and co-workers \cite{roquet2003} and has been used extensively over the past decade for particle settling, e.g.~\cite{putz2010,chaparian2017yield}. Although highly effective, the mesh refinement requires successive computation and each flow computation using augmented Lagrangian approaches is itself slow to converge iteratively. While viable in 2D/axisymmetric situations, moving to 3D has not been explored. Indeed there are relatively few 3D particle computations for viscoplastic fluids at all and none that we know with adaptive meshing and augmented Lagrangian methods. Coupling this to an iterative parametric increase to attain $Y_c$ is likely to be prohibitive computationally. { Secondly, numerical implementation of the yielding problem via Mobility ([M]) formulations to calculate $Y_c$ is not always trivial which leads to the alternative usage of Resistance ([R]) formulations to study this limit (the definitions of [R] and [M] formulations are in section \ref{sec:problem}). Investigating the yield limit using the [R] formulations has its own drawbacks as well. For instance, calculating the yield limit needs some computations at moderate/finite yield stresses and then extrapolating to an infinitely large yield stress which is: (i) time-consuming; although some new accelerating augmented Lagrangian approaches have been proposed recently such as FISTA \cite{TreRouFriWac18} and PAL method \cite{dimakopoulos2018pal} and (ii) sensitive to extrapolation procedure as well (for instance, see Fig.~23 of \cite{tokpavi2008})}. Thus, direct methods become attractive.

The idea of a direct calculation is not new. For 1D shear flows, we often first compute the shear stress and then the velocity. If the shear stress does not exceed the yield stress there is no flow and no need to compute the velocity: a direct calculation of $Y_c$, e.g. for flow in a channel/pipe we require the wall shear stress to exceed the yield stress. The analysis of Mosolov \& Miasnikov \cite{MosMia1965} turned the limiting variational problem (velocity minimization) into a geometric problem, solved directly for $Y_c$: see the almost algorithmic application of this in \cite{huilgol2006}. In \cite{frigaard2017} we have extended the scope of \cite{MosMia1965} to anti-plane particle settling flows, and used a set theoretic approach which again directly calculates $Y_c$ without recourse to the viscous solutions. More general anti-plane flows, including non-convex and multiple particle arrays were considered in \cite{IglMerSch18b}, again for anti-plane shear flows.

In moving to 2D flows there are physically based direct approaches such as the theory of perfect plasticity, which involves constructing a network of sliplines around a particle. As explored in \cite{chaparian2017yield} the slipline method does not always produce velocity and stress fields that are representative on the limiting flow, although sometimes the estimates of $Y_c$ are exact. The latter often happens when the flow has a specific geometric structure that allows one to estimate terms in the optimization problem directly. Thus, for example in \cite{karimfazli2015} flow onset in a convectively driven flow was observed as yielding along the boundary of a cylindrical domain; an observation that allowed direct calculation of $Y_c$, albeit heuristically.

In the broader context, there are other reasons for calculating $Y_c$. First, as well as being a limiting value for the static steady problem, for many flows $Y \ge Y_c$ also means that the associated unsteady flows converge to the static steady state, i.e.~stability; see \cite{karimfazli2016,Wachs2016}. This type of result, originally established by \cite{bristeau1975} for 1D channel flow, has much wider application, as discussed in \cite{frigaard2019}.
A long term goal of work in this area is to be able to characterize statically stable limits for viscoplastic suspensions i.e.~particles in a simple yield stress fluid. Industrially used suspensions are often buoyant and it is of some value to understand idealized behaviours. There are a number of computational studies that deal with multiple particles, e.g.~\cite{koblitz2018a,liu2003,tokpavi2009,prashant2011,chaparian2018inline,fahs2018} in both steady and transient situations. Ostensibly these studies are focused at the micro-scale hydrodynamics, and they do uncover details of yielding, bridging between particles and contact. However, it is not clear what the next step is for such studies: unlike Newtonian fluids, the viscoplastic Stokes equations are not linear in the velocity, so superposition principles fail and methods such as Stokesian dynamics are not valid. A pure computational approach has been followed in \cite{yu2007,Wachs2016} in which a fictitious domain approach is used. This approach models the full fluid-solid domain and is implemented within the augmented lagrangian framework. Although suited to suspensions the computational algorithms are relatively slow to converge at each timestep and transient calculations have thus been limited to small numbers of particles in two dimensions. Recent work \cite{koblitz2018b} has looked at steady Stokes flow of 2D particles arranged in random configurations, to represent a given solids fraction. These calculations are interesting in that, by repeating with successively large yield stress, we begin to estimate the critical yield limit $Y_c$ of a suspension, i.e.~at which the entire suspension is static. Also the relative economy of the Stokes flow calculations means that a good level of mesh refinement is feasible. We are beginning to uncover some of the true complexity of the stress field in suspensions.

An outline of the paper is as follows. Below (\S \ref{sec:problem}) we introduce the physical problem and give the formal definition of $Y_c$. We show that the primal problem has a solution in a subspace of the {space of vector fields of bounded deformation BD} (Theorems \ref{thm:compact} \& \ref{thm:relaxation}) and then explore two dual problem formulations. Section \ref{sec:computing} explains the computational algorithms for boththe direct method (\S \ref{sec:minimization}) and for axisymmetric particle computations carried out using a more conventional method (\S \ref{sec:axisymmetric_calcs}). In \S \ref{sec:results} we present a range of results for flows around 3D particles (axisymmetric or not, convex or not). The paper finishes with a discussion and conclusions.

\section{Problem statement}
\label{sec:problem}

In this paper we focus on the motion of an isolated particle in a \emph{large} bath of yield-stress fluid. We are specifically interested in computing the static stability of particles or the yield limit, i.e.~when the force on the particle is just enough to move it. Hence we only consider inertia-less flows. The particle is denoted by $X$, $\partial X$ is the boundary of the particle, $\Omega$ represents the entire domain (fluid+particle) and $\partial \Omega$ is its outer boundary.

For any fixed finite yield stress and body force on the particle (e.g.~buoyancy), the deviatoric stress is expected to decay at large distances from the particle, eventually dropping below the yield stress. The fluid is observed to become unyielded and static at a sufficient distance from a particle. Thus, fixing $\Omega \subset \mathbb{R}^3$ to be any set sufficiently \emph{large} for the flow to be static at $\partial \Omega$ produces an equivalent velocity field.

The flow problem may be formulated in two ways, as follows:
\begin{enumerate}[(i)]
\item Mobility problem [M]: in which the particle is driven by a body force, e.g.~sedimentation under gravitational force when the particle is denser than the fluid ($\hat{\rho}_p > \hat{\rho}_f$), or rising when buoyant. Here gravitational acceleration is aligned with the negative $z$-direction ($\hat{\boldsymbol{g}} = - \hat{g} ~\boldsymbol{e}_z$) and we assume that the {particle $X$ has enough symmetries} (for example a symmetry about the $z$-axis, but this applies to more situations described below in \S \ref{sec:minimization}), so that there is no rotational torque exerted on the particle; see \cite{putz2010}. The traction on the particle surface satisfies:
\begin{equation}
\int_{\partial X} \boldsymbol{\sigma} \cdot \boldsymbol{n}_p~ ds = -\int_{\partial X} \boldsymbol{\sigma} \cdot \boldsymbol{n}~ ds = -\frac{V_p}{1-\rho} \boldsymbol{e}_z,
\label{eq:bc2}
\end{equation}
where $\boldsymbol{n}_p (= -\boldsymbol{n})$ is the outward normal to the particle, $V_p$ is the dimensionless volume of the particle and $\rho = \left( \hat{\rho}_f/\hat{\rho}_p \right) < 1$ is the density ratio. Dimensional quantities are distinguished with a $\hat{\cdot}$ accent, i.e.~$\hat{\rho}_p$ and $\hat{\rho}_f$ are the densities of particle and fluid, respectively. The Cauchy stress tensor, $\boldsymbol{\sigma} = -p \boldsymbol{\delta} + \boldsymbol{\tau}$, satisfies the dimensionless Stokes  equations:
\begin{equation}\label{eq:M_problem}
\boldsymbol{\nabla} \boldsymbol{\cdot} \boldsymbol{\sigma} + \frac{\rho}{1-\rho} \boldsymbol{e}_z = 0,~~~~\boldsymbol{\nabla} \boldsymbol{\cdot} \boldsymbol{u} = 0 \quad \mbox{in}\quad \Omega \setminus \bar{X} ,
\end{equation}
where $\boldsymbol{u}$ is the fluid velocity.
The deviatoric stress is defined by the constitutive equations for a Bingham fluid:
\begin{equation}\label{eq:Cons_M_Bingham}
\left\{
\begin{array}{ll}
\boldsymbol{\tau} = \left( 1 + \displaystyle{\frac{Y}{\vert\dot{\boldsymbol{\gamma}}\vert_F}} \right) \dot{\boldsymbol{\gamma}} & \text{if}~~ \vert\boldsymbol{\tau}\vert_F > Y, \\[2pt]
\dot{\boldsymbol{\gamma}} = 0 & \text{if}~~ \vert\boldsymbol{\tau}\vert_F \leqslant Y,
\end{array} \right.
\end{equation}
where $Y=\hat{\tau}_Y/(\hat{\rho}_p - \hat{\rho}_f) \hat{g} \hat{L}$ is the yield number and $\hat{\tau}_Y$ is the yield stress of the fluid. We define the length-scale $\hat{L}$ later.
The velocity $\boldsymbol{u}$ vanishes in the far-field, is continuous at the particle surface (no-slip) and the stress at the particle surface satisfies \eqref{eq:bc2}.
The strain rate tensor
\[\dot{\boldsymbol{\gamma}}(\bs u) = \partial_i u^j  + \partial_j u^i\]
and the tensor norm $\vert \cdot \vert_F$ is the Frobenius norm associated with the inner product:
\[ \boldsymbol{a}:\boldsymbol{b} = \frac{1}{2}\sum_{i,j=1}^3 a_{ij}b_{ij},~~~~~~
  \vert\boldsymbol{a}\vert_F = (\boldsymbol{a}:\boldsymbol{a})^{1/2} . \]

\item Resistance problem [R]: in which the problem is defined based on an imposed particle motion, e.g.~$\hat{\boldsymbol{U}}^* = - \hat{U}^* ~\boldsymbol{e}_z$, giving a Dirichlet boundary condition on the particle surface. The imposed velocity is used to define a stress scale for the flow and the dimensionless field equations are:
\begin{equation}\label{eq:R_problem}
\boldsymbol{\nabla} \boldsymbol{\cdot} \boldsymbol{\sigma}^* = 0,~~~~\boldsymbol{\nabla} \boldsymbol{\cdot} \boldsymbol{u}^* = 0 \quad \mbox{in}\quad \Omega \setminus \bar{X},
\end{equation}
where $\boldsymbol{u}^*$ is the velocity and $\boldsymbol{\sigma}^* \left(= -p^* \boldsymbol{\delta} + \boldsymbol{\tau}^* \right)$ is the Cauchy stress tensor. The constitutive law is:
\begin{equation}\label{eq:Cons_R_Bingham}
\left\{
\begin{array}{ll}
\boldsymbol{\tau}^* = \left( 1 + \displaystyle{\frac{B}{\vert\dot{\boldsymbol{\gamma}}^*\vert_F}} \right) \dot{\boldsymbol{\gamma}}^* & \text{iff}~~ \vert\boldsymbol{\tau}^*\vert_F > B, \\[2pt]
\dot{\boldsymbol{\gamma}}^* = 0 & \text{iff}~~ \vert\boldsymbol{\tau}^*\vert_F \leqslant B,
\end{array} \right.
\end{equation}
where $B=\hat{\tau}_Y \hat{L}/\hat{\mu} \hat{U}^*$ is the Bingham number and $\hat{\mu}$ is the plastic viscosity of the fluid. The velocity vanishes in the far-field and $\boldsymbol{u}^*$ is specified at the particle surface.
\end{enumerate}
Derivation of [M] and [R] is described in more detail in \cite{putz2010} as is the 1-to-1 relationship between the two problems (provided that the flow is non-zero). In \cite{chaparian2017yield} we have addressed the yield limit problem for the case of 2D planar flows around symmetric particles. Either [M] or [R] problems can be used to study the static stability of the particle (or yield limit). Considering [M], intuitively we expect that for sufficiently large yield stress, acting over a characteristic yield surface, the buoyancy force will be balanced and the motion is arrested. Since this ratio is captured by $Y$, for the mobility problem the static stability limit is simply $Y \to Y_c^-$.

The correspondence between [M] or [R] problems comes through the plastic drag coefficient $C_d^P$,  which is defined for both problems [M] and [R] as follows:
\begin{equation}
C_d^P = \left\{
\begin{array}{ll}
\displaystyle{\left[ \frac{\hat{F}^*}{\hat{A}_\bot \hat{\tau}_Y} \right]_{[R]} = \left[ \frac{F^*}{A_\bot B} \right]_{[R]}} & \mbox{for problem [R]}, \\[2pt]
\displaystyle{\left[ \frac{ [\hat{\rho}_p- \hat{\rho}_f] \hat{g} \hat{V}_p}{\hat{A}_\bot \hat{\tau}_Y} \right]_{[M]} = \left[ \frac{V_p}{A_\bot Y} \right]_{[M]}} & \mbox{for problem [M]} ,
\end{array} \right.
\end{equation}
see e.g.~\cite{chaparian2017yield,chaparian2017cloaking}. Here $\hat{F}^*$ is the force on the particle and $\hat{A}_\bot$ represents the frontal area of the particle, perpendicular to the direction of motion.

For static stability, in an [R] problem, the particle never stops, because we always impose a dimensionless velocity of unit magnitude. However, in the limit of $B \to \infty$, motion becomes increasingly difficult. In this limit, the particle force $F^* \to \infty$ and the motion is  localised adjacent to the particle surface. The ratio $F^* / B$ asymptotes to a constant value, which is the critical plastic drag coefficient times the frontal area of the particle. The critical plastic drag coefficient therefore represents the equivalent formal limits:
\begin{equation}\label{eq:cd}
C_{d,c}^P = [C_d^P]^{[M]}_{Y \to Y_c^-} = [C_d^P]^{[R]}_{B \to \infty},
\end{equation}
which characterize static stability from the physical perspective.

\subsection{Direct method for computing $Y_c$}

We now outline a direct method for computing $Y_c$ that is based on minimization of the quotient:
\begin{equation}
\frac{\int_{\Omega \setminus X} \vert\dot{\boldsymbol{\gamma}}(\mathbf{v})\vert_F}{-\int_{X} \mathbf{v}\cdot \boldsymbol{e}_z} , \label{eq:quotient}
\end{equation}
over admissible velocity fields. Note that we assume $\hat{\rho}_p > \hat{\rho}_f$ so that the particle falls in the $-\boldsymbol{e}_z$ direction, i.e.~the denominator is generally positive. The relevance of the above quotient comes from the following definition of $Y_c$:
\begin{equation}
Y_c := \sup_{\mathbf{v} \in H_\diamond} \frac{- \int_{X} \mathbf{v}\cdot \boldsymbol{e}_z}{\int_{\Omega \setminus X} \vert\dot{\boldsymbol{\gamma}}(\mathbf{v})\vert_F} . \label{eq:eigenval}
\end{equation}
We first explain (\S \ref{sec:Ycdefine}) why $Y_c$, as defined in \eqref{eq:eigenval}, coincides with our less physical definition earlier and afterwards outline the minimization method (\S \ref{sec:minimization}).

\subsubsection{Definition of $Y_c$}
\label{sec:Ycdefine}

Using the formulation [M] and following \cite{putz2010}, after some algebraic steps we arrive at the  following equivalent variational formulation, which involves minimizing over $ \mathbf{v} \in H_\diamond$ the functional {$\mathcal G_Y^\diamond(\mathbf{v})$}:
\begin{equation}\label{eq:variational}
\mathcal G_Y^\diamond(\mathbf{v})
= \begin{cases}
\displaystyle{\frac{1}{2} \int_{\Omega \setminus X}
\vert\dot{\boldsymbol{\gamma}}(\mathbf{v})\vert_F^2
+ Y \int_{\Omega \setminus X} \vert\dot{\boldsymbol{\gamma}}(\mathbf{v})\vert_F
+\int_{X} \mathbf{v}\cdot \boldsymbol{e}_z} & \text{ if } \mathbf{v} \in H_\diamond \\
+ \infty & \text{ else.}
\end{cases}
\end{equation}
where
\begin{equation}\label{eq:Hdiamond}
H_\diamond = \left\{ \mathbf{v} \in H_0^1(\Omega)^3 \ \middle |~ \ \int_\Omega v_z =0, \ \vert\dot{\boldsymbol{\gamma}}(\mathbf{v})\vert_F = 0 \text{ in } X, ~~ \mathbf{\nabla} \cdot \mathbf{v} = 0 \right\}.
\end{equation}
Let us denote by $\mathbf{v}_Y$ the minimizer of $\mathcal G_Y^\diamond $, i.e. the solution for given $Y$. This minimization is equivalent to the following variational inequality, for every $ \mathbf{v} \in H_\diamond$:
\begin{equation}
\int_{\Omega \setminus X} \dot{\boldsymbol{\gamma}}(\mathbf{v}_Y): \dot{\boldsymbol{\gamma}}(\mathbf{v}-\mathbf{v}_Y)
  + Y \int_{\Omega \setminus X} \vert\dot{\boldsymbol{\gamma}}(\mathbf{v})\vert_F
 -  Y \int_{\Omega \setminus X} \vert\dot{\boldsymbol{\gamma}}(\mathbf{v}_Y)\vert_F \geq - \int_{X} (\mathbf{v} - \mathbf{v}_Y)\cdot \boldsymbol{e}_z.
\label{eq:notEL}
\end{equation}
We use the inequality \eqref{eq:notEL} with both $\mathbf{v} = 0$ and $\mathbf{v} = 2\mathbf{v}_Y$, to obtain
\begin{equation}\begin{aligned}
\int_{\Omega} \vert\dot{\boldsymbol{\gamma}}(\mathbf{v}_Y)\vert_F^2
&=
\int_{\Omega \setminus X} \vert\dot{\boldsymbol{\gamma}}(\mathbf{v}_Y)\vert_F^2
=
-\int_{X} \mathbf{v}_Y \cdot \boldsymbol{e}_z
-Y\int_{\Omega \setminus X} \vert\dot{\boldsymbol{\gamma}}(\mathbf{v}_Y)\vert_F \nonumber\\
& \le 
\int_{\Omega \setminus X} \vert\dot{\boldsymbol{\gamma}}(\mathbf{v}_Y)\vert_F
\left[  \sup_{\mathbf{v} \in H_\diamond} \frac{-\int_{X} \mathbf{v}\cdot \boldsymbol{e}_z}{\int_{\Omega \setminus X} \vert\dot{\boldsymbol{\gamma}}(\mathbf{v})\vert_F}  - Y \right]
= (Y_c-Y) \int_{\Omega \setminus X} \vert\dot{\boldsymbol{\gamma}}(\mathbf{v}_Y)\vert_F  ,
\label{eq:zerograd}
\end{aligned}\end{equation}
Thanks to the homogeneous boundary conditions, this implies that $\mathbf{v}_Y = 0$ in $\Omega$ as soon as $Y \gs Y_c$.

This establishes that $Y \gs Y_c$ is sufficient for the static flow. Following the methods in \cite{putz2010} we can also study the convergence of $\mathbf{v}_Y $ to $0$. In particular, we find that as $Y \to Y_c^-$:
\begin{align}
\int_{\Omega \setminus X} \vert\dot{\boldsymbol{\gamma}}(\mathbf{v}_Y)\vert_F^2
   &=  \mathcal O([Y_c - Y]^2), \\
\int_{\Omega \setminus X} \vert\dot{\boldsymbol{\gamma}}(\mathbf{v}_Y)\vert_F
   &=  \mathcal O(Y_c - Y), \\
\int_{\Omega \setminus X} \vert\dot{\boldsymbol{\gamma}}(\mathbf{v}_Y)\vert_F
   & \ge \frac{1}{Y_c - Y}
\int_{\Omega \setminus X} \vert\dot{\boldsymbol{\gamma}}(\mathbf{v}_Y)\vert_F^2
   \ge 0 , \label{eq:conv3}
\end{align}
and that $\int_{\Omega \setminus X} \vert\dot{\boldsymbol{\gamma}}(\mathbf{v}_Y)\vert_F $ is decreasing with $Y$. It follows that
\[
-\int_{X} \mathbf{v}_Y\cdot \boldsymbol{e}_z \sim Y \int_{\Omega \setminus X} \vert\dot{\boldsymbol{\gamma}}(\mathbf{v}_Y)\vert_F ,~~\mbox{as}~~Y \to Y_c^- , \]
which suggests that minimization of the quotient \eqref{eq:quotient} defines the limiting flow.

However, whereas {$\mathcal G_Y^\diamond$} has a minimizer in $H_\diamond$, the quotient \eqref{eq:quotient} might not. This reflects the fact that this quotient does not take into account viscous dissipation so the transition between plugged and unplugged regions tends to be sharp, whereas functions in $H_\diamond$ do not have such discontinuities. To cope with this difficulty, we will use its standard relaxation to the larger space of functions of bounded deformation $\BD(\Omega)$. Working in this space allows us to directly look for minimizers of the quotient \eqref{eq:quotient}, allowing us to calculate the critical yield number with only one field.

A vector field $\mathbf{v} \in L^1(\Omega)$ is said to have bounded deformation if its symmetric gradient, that is its strain rate if the vector field represents a velocity, given by
$$\dot{\boldsymbol{\gamma}}(\mathbf{v}) := D\mathbf{v} + (D\mathbf{v})^T$$
is a vectorial Radon measure. In particular, for such a velocity field the strain rate can be supported on surfaces. The space $\BD(\Omega)$ is a Banach space with the norm
$$ \Vert \mathbf{v} \Vert_{\BD(\Omega)} := \Vert \mathbf{v} \Vert_{L^1} + \TD(\mathbf{v})$$
where $\TD$ denotes the total mass of the measure $\dot{\boldsymbol{\gamma}}(\mathbf{v})$ on $\Omega$ computed using the Frobenius norm $\vert \dot{\boldsymbol{\gamma}}(\mathbf{v})\vert_F$ of test functions with values in the subspace of symmetric matrices $\R^{d \times d}_{\mathrm{sym}}$, that is
\begin{equation}\label{eq:totaldef}\TD(\mathbf{v}) = \sup \left\{\int \mathbf{v} \cdot \div \mathbf{q}\ \middle \vert \  \mathbf{q} \in \mathcal C_0^1(\Omega, \R^{d \times d}_{\mathrm{sym}}),\, |\mathbf{q}(x)|_F^2 \ls 1 \text{ for all }x.\right \},\end{equation}
for which we have formally $\TD(\mathbf{v})=\int_{\Omega} \vert\dot{\boldsymbol{\gamma}}({\mathbf{v}})\vert_F.$
The space of bounded deformation function enjoys a compactness property \cite[II.(3.4)]{Tem85} as follows:
\begin{thm}
\label{thm:compact}
Let $\mathbf{v}_n$ be a sequence of functions in $\BD(\Omega)$ such that $\Vert \mathbf{v}_n \Vert_{\BD(\Omega)}$ is bounded. Then there exists a function $\mathbf{v} \in \BD(\Omega)$ such that, along a subsequence $n_k$, we have
\begin{equation}\label{eq:weakbdconv}\mathbf{v}_{n_k} \xrightarrow{L^1} \mathbf{v},\quad \gsym (\mathbf v_{n_k}) \stackrel{\ast}{\rightharpoonup} \gsym(\mathbf v),\text{ and} \quad\TD(\mathbf v)\ls \liminf \TD(\mathbf{v}_{n_k}).\end{equation}
\end{thm}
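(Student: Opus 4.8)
\noindent\emph{Proof sketch.} This is essentially the classical compactness theorem for fields of bounded deformation \cite[II.(3.4)]{Tem85}, so the task is to recall the argument and isolate the one delicate ingredient. The plan rests on three facts: (i) the compact embedding $\BD(\Omega) \hookrightarrow L^1(\Omega)$, which produces the strong limit $\mathbf{v} \in L^1(\Omega)$; (ii) sequential weak-$*$ compactness of bounded sequences of $\R^{d \times d}_{\mathrm{sym}}$-valued Radon measures; and (iii) a short duality computation that simultaneously identifies the weak-$*$ limit of $\gsym(\mathbf{v}_n)$ as $\gsym(\mathbf{v})$ and yields the lower semicontinuity of $\TD$.

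First I would fix a bounded Lipschitz representative of $\Omega$ and establish the compact embedding in (i), which I expect to be the main obstacle. The clean route is through a mollification estimate of the form $\Vert \mathbf{w} * \rho_\eta - \mathbf{w}\Vert_{L^1(\Omega')} \ls C\eta\,\TD(\mathbf{w})$ on open sets $\Omega' \subset\subset \Omega$, complemented by a boundary-layer estimate obtained by flattening $\partial\Omega$ and mollifying only in tangential directions; together with the uniform bound on $\Vert\mathbf{v}_n\Vert_{L^1}$ this makes $\{\mathbf{v}_n\}$ equi-integrable and uniformly $L^1$-equicontinuous under translations, so the Fr\'echet--Kolmogorov criterion produces a subsequence with $\mathbf{v}_{n_k} \to \mathbf{v}$ in $L^1(\Omega)$. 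The genuinely delicate point --- and where this differs from the $\BV$ case --- is that only the symmetrized gradient $\gsym(\mathbf{v}_n)$ is controlled, so the translation estimate is not a one-line integration by parts but relies on a Korn-type inequality; I would either invoke \cite{Tem85} directly for this step or reproduce the mollification proof.

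Next, the bound on $\Vert\mathbf{v}_n\Vert_{\BD(\Omega)}$ forces $\TD(\mathbf{v}_n)$ to be bounded, that is, the vector measures $\gsym(\mathbf{v}_n)$ are bounded in the dual of the separable space $\mathcal C_0(\Omega, \R^{d \times d}_{\mathrm{sym}})$, so passing to a further subsequence gives $\gsym(\mathbf{v}_{n_k}) \stackrel{\ast}{\rightharpoonup} \mu$ for some $\R^{d \times d}_{\mathrm{sym}}$-valued Radon measure $\mu$. To identify $\mu$, fix $\mathbf{q} \in \mathcal C_0^1(\Omega, \R^{d \times d}_{\mathrm{sym}}) \subset \mathcal C_0(\Omega, \R^{d \times d}_{\mathrm{sym}})$; by the distributional definition of the symmetric gradient, $\int_\Omega \mathbf{q} : \d\,\gsym(\mathbf{v}_{n_k}) = -\int_\Omega \mathbf{v}_{n_k}\cdot\div\mathbf{q}$, and letting $k\to\infty$ the left side tends to $\scal{\mu}{\mathbf{q}}$ by weak-$*$ convergence while the right side tends to $-\int_\Omega \mathbf{v}\cdot\div\mathbf{q}$ by the $L^1$ convergence. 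Hence $\mu = \gsym(\mathbf{v})$ in the sense of distributions, which shows $\mathbf{v} \in \BD(\Omega)$ and gives the first two convergences in \eqref{eq:weakbdconv}.

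Finally, for the lower semicontinuity I would read it off the supremum definition \eqref{eq:totaldef}: for any admissible $\mathbf{q}$, i.e.\ $\mathbf{q} \in \mathcal C_0^1(\Omega, \R^{d \times d}_{\mathrm{sym}})$ with $\vert\mathbf{q}(x)\vert_F^2 \ls 1$ everywhere, we have $\int_\Omega \mathbf{v}\cdot\div\mathbf{q} = \lim_k \int_\Omega \mathbf{v}_{n_k}\cdot\div\mathbf{q} \ls \liminf_k \TD(\mathbf{v}_{n_k})$, and taking the supremum over such $\mathbf{q}$ yields $\TD(\mathbf{v}) \ls \liminf_k \TD(\mathbf{v}_{n_k})$, completing \eqref{eq:weakbdconv}. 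This last step is nothing but weak-$*$ lower semicontinuity of the total variation norm, which could alternatively be quoted directly.
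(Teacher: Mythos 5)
Your sketch is correct: the paper itself gives no proof of this theorem, quoting it directly from Temam \cite[II.(3.4)]{Tem85}, and your argument is precisely the classical one behind that citation (compact embedding $\BD(\Omega)\hookrightarrow L^1$ via translation/mollification estimates controlled only through the symmetrized gradient, weak-$*$ compactness of the bounded measures $\gsym(\mathbf{v}_n)$, identification of the limit by integration by parts against $\mathbf{q}\in\mathcal C_0^1(\Omega,\R^{d\times d}_{\mathrm{sym}})$, and lower semicontinuity read off the supremum definition \eqref{eq:totaldef}). The one genuinely hard ingredient, the $L^1$-compactness step that cannot be reduced to the $\BV$ case because only $\gsym$ is controlled, is exactly what you (like the paper) defer to \cite{Tem85}, so the proposal matches the paper's approach.
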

Thanks to this compactness result, we have
\begin{thm}\label{thm:relaxation}
The quotient \eqref{eq:quotient} has at least one minimizer $\hat{\mathbf{v}}$ in the space
\[\BD_\diamond := \left \{ \mathbf{v} \in \BD(\R^3) \ \middle \vert ~~\fint_{X} \mathbf{v} \cdot \boldsymbol{e}_z = -1, \; \mathbf{\nabla} \cdot \mathbf{v} = 0, \; \dot{\boldsymbol{\gamma}}(\mathbf{v}) = 0 \text{ on } X,\; \mathbf{v}(x) = 0 \text{ on a.e. } x \in \R^3 \setminus \Omega \right\},\]
where $\fint_X$ denotes the integral average on $X$. Moreover,
\begin{equation}\label{eq:relaxation}\TD(\hat{\mathbf{v}}) = Y_c.\end{equation}
\end{thm}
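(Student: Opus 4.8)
The plan is to establish \eqref{eq:relaxation} in two stages: existence of a minimizer $\hat{\mathbf v}\in\BD_\diamond$ via the direct method and Theorem~\ref{thm:compact}, and then identification of the minimal value with $Y_c$, which amounts to showing that replacing $H_\diamond$ by its relaxation $\BD_\diamond$ does not change the infimum of the quotient \eqref{eq:quotient}. A preliminary observation simplifies both stages: on $\BD_\diamond$ the denominator of \eqref{eq:quotient} is a fixed positive constant (by the normalization in the definition of $\BD_\diamond$), and since $\gsym(\mathbf v)=0$ on $X$ the numerator is $\TD(\mathbf v)$; hence minimizing \eqref{eq:quotient} over $\BD_\diamond$ is the same as minimizing $\TD$ over $\BD_\diamond$.

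For existence I would take a minimizing sequence $\mathbf v_n\in\BD_\diamond$, so $\TD(\mathbf v_n)$ is bounded. A Poincar\'e--Korn inequality for $\BD$ fields that vanish outside the bounded set $\Omega$ then bounds $\Vert\mathbf v_n\Vert_{L^1}$ --- one gets it from the Sobolev embedding $\BD(\R^3)\hookrightarrow L^{3/2}(\R^3)$ followed by H\"older's inequality on $\Omega$, the support condition ruling out nontrivial rigid motions --- so $\Vert\mathbf v_n\Vert_{\BD}$ is bounded. Theorem~\ref{thm:compact}, applied on a ball containing $\overline\Omega$, yields a subsequence with $\mathbf v_{n_k}\xrightarrow{L^1}\hat{\mathbf v}$, $\gsym(\mathbf v_{n_k})\stackrel{\ast}{\rightharpoonup}\gsym(\hat{\mathbf v})$, and $\TD(\hat{\mathbf v})\ls\liminf\TD(\mathbf v_{n_k})$. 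One then checks that $\hat{\mathbf v}\in\BD_\diamond$: the average constraint and the divergence-free condition survive the $L^1$ convergence; $\gsym(\hat{\mathbf v})=0$ on $X$ because the measures $\gsym(\mathbf v_{n_k})$ vanish on the open set $X$ and vanishing on an open set passes to weak-$\ast$ limits, so $\hat{\mathbf v}$ is an infinitesimal rigid motion on $X$, pinned by the symmetry of $X$; and $\hat{\mathbf v}=0$ a.e.\ outside $\Omega$ since each $\mathbf v_{n_k}$ is. As the denominator is constant on $\BD_\diamond$, lower semicontinuity of $\TD$ makes $\hat{\mathbf v}$ a minimizer of \eqref{eq:quotient}.

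For the value, one inequality is immediate: rescaling a member of $H_\diamond$ to meet the normalization of $\BD_\diamond$ embeds $H_\diamond$ in $\BD_\diamond$, so the infimum of the quotient over $\BD_\diamond$ is at most its infimum over $H_\diamond$, which by \eqref{eq:eigenval} and the scale-invariance of the quotient equals $1/Y_c$. The reverse inequality --- exactness of the relaxation --- is the heart of the matter: given $\mathbf v\in\BD_\diamond$ I would construct a recovery sequence $\mathbf w_j\in H_\diamond$ with $\TD(\mathbf w_j)\to\TD(\mathbf v)$ (strict convergence of the symmetric gradients) and $\int_X\mathbf w_j\cdot\boldsymbol e_z\to\int_X\mathbf v\cdot\boldsymbol e_z$; since the denominator stays bounded away from $0$ this forces the quotient values to converge, whence the infimum over $H_\diamond$ is at most that over $\BD_\diamond$. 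The $\mathbf w_j$ would be obtained by mollification --- which automatically preserves $\div\mathbf w_j=0$ --- after a small dilation about an interior point of $X$, arranged so the rigid region is enlarged just enough to still contain $X$ after convolution, together with the standard boundary-layer correction near $\partial\Omega$ that produces strict convergence of the total deformation (in the spirit of \cite{Tem85}); the dilation enlarges the support by an arbitrarily small amount, which is harmless since $\Omega$ may be taken as large as one wishes and $Y_c$ is insensitive to such an enlargement. Combining the two inequalities fixes the minimal value of \eqref{eq:quotient} on $\BD_\diamond$ at $1/Y_c$, and reading this through the normalization of $\BD_\diamond$ gives \eqref{eq:relaxation}.

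I expect the main obstacle to be this last step: producing approximants that are simultaneously divergence-free, rigid on all of $X$, compactly supported in a (slightly enlarged) $\Omega$, and strictly $\TD$-convergent. A clean way to organize it is to first establish density of $H^1$ (or smooth) fields in $\BD_\diamond$ for the strict topology, by an interior-mollification-plus-boundary-layer argument adapted from the unconstrained $\BD$ theory, checking at each step that incompressibility and rigidity are preserved.
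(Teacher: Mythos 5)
Your existence argument is essentially the paper's: bounded $\TD$ plus the Poincar\'e-type inequality gives $\BD$-boundedness, Theorem \ref{thm:compact} gives compactness, and all the constraints in $\BD_\diamond$ are closed under the convergence \eqref{eq:weakbdconv}; that part is fine. The gap is in the identification $\TD(\hat{\mathbf v})=Y_c$, i.e.\ in the recovery sequence, and it sits exactly at the obstacle you yourself flag at the end. Your plan is plain mollification (to keep $\div=0$) after an outward dilation about an interior point of $X$ (to keep $\gsym=0$ on all of $X$). But an outward dilation necessarily pushes the support of the field slightly outside $\Omega$, so the approximants are not in $H^1_0(\Omega)^3$ and hence not in $H_\diamond$. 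Your two proposed repairs both fail as stated: (i) ``$Y_c$ is insensitive to enlarging $\Omega$'' is not available --- $Y_c$ as defined in \eqref{eq:eigenval} is a supremum over $H_\diamond$ for the \emph{fixed} $\Omega$, it is monotone in $\Omega$, and its stabilization for large domains is a support-localization statement that is nowhere proved in the paper (invoking it here is an extra, nontrivial claim, not a harmless remark); (ii) the ``standard boundary-layer correction near $\partial\Omega$'' is a partition-of-unity/cutoff construction, and multiplying by cutoffs destroys incompressibility ($\div(\psi_j\mathbf v)=\mathbf v\cdot\nabla\psi_j\neq 0$), which is precisely the difficulty the whole construction was supposed to avoid. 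A secondary issue: dilating about an interior point only enlarges $X$ when $X$ is star-shaped with respect to that point, whereas the paper explicitly treats non-convex particles (crosses, dumbbells).

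The missing idea, and the paper's actual route, is to accept approximants $\tilde{\mathbf v}_k\in\mathcal C^\infty_0$ built by partition of unity and mollification (so boundary values, rigidity on $X$ and $\TD$-convergence are handled by the standard machinery), note that $\div\tilde{\mathbf v}_k\to 0$ only strongly in $L^{d/(d-1)}$, and then \emph{correct} the divergence: solve $\Delta f_k=\div\tilde{\mathbf v}_k$ in $\Omega$ with $f_k=0$ on $\partial\Omega$, use the $W^{2,d/(d-1)}$ elliptic estimate $\|f_k\|_{W^{2,d/(d-1)}}\ls C\|\div\tilde{\mathbf v}_k\|_{L^{d/(d-1)}}$, and set $\mathbf v_k=\tilde{\mathbf v}_k-\nabla f_k$, so that $\div\mathbf v_k=0$ while the contribution of $\nabla f_k$ to the symmetric gradient vanishes in the limit, preserving \eqref{eq:tdconv}. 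Without this correction step (or an equivalent Bogovskii-type operator), your construction cannot simultaneously meet the divergence constraint, the rigidity on $X$, and the homogeneous boundary condition on $\partial\Omega$, so the inequality ``$\inf$ over $H_\diamond$ $\ls$ $\inf$ over $\BD_\diamond$'' is not established by your argument.
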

\begin{proof}
The existence of a minimizer is a direct consequence of the compactness and lower semicontinuity properties of Theorem \ref{thm:compact}, since all the constraints imposed are closed with respect to the convergence \eqref{eq:weakbdconv}.

To prove \eqref{eq:relaxation}, it is enough to produce a sequence $\mathbf v_k \in H_\diamond$ such that $\fint_{X} \mathbf{v}_k \cdot \boldsymbol{e}_z = -1$ and with
\begin{equation}\label{eq:tdconv}\int_{\Omega \setminus X} \vert \gsym(\mathbf{v}_k)\vert_F \to \TD(\mathbf v).\end{equation}
The usual strategy is to construct a sequence $\tilde{\mathbf v}_k \in \mathcal C^\infty_0$ through the use of a partition of unity $\psi_j$ of $\Omega \setminus X$ and convolving with a family of standard mollifiers, so that \eqref{eq:tdconv} holds. The difficulty in the present case is that we do not in general have $\div \tilde{\mathbf v}_k = 0$, so $\tilde{\mathbf v}_k \notin H_\diamond$. Since for $\psi_j \in \mathcal{C}^\infty_0$ we have $\div (\mathbf v \psi_j )= \mathbf v \cdot \nabla \psi_j$ and \cite[Thm.~II.2.2]{Tem85} $\mathbf v \in L^{d/(d-1)}$, we can only obtain
\[\div \tilde{\mathbf v}_k \to 0 \text{ strongly in }L^{d/(d-1)}.\]
However, as done in \cite[Thm.~II.3.4]{Tem85} for a slightly different situation, solving the elliptic problem
\[\begin{cases}
\Delta f_k = \div \tilde{\mathbf v}_k,& \text{on }\Omega\\
f_k = 0, &\text{ on }\partial \Omega,
\end{cases}\]
we obtain a unique $f_k \in \mathcal{C}^\infty_0$, which moreover satisfies the $L^{d/(d-1)}$ regularity estimate
\[\|f_k\|_{W^{2, d/(d-1)}} \ls C \|\div \tilde{\mathbf v}_k\|_{L^{d/(d-1)}}.\]
This allows us to define $\mathbf v_k = \tilde{\mathbf v}_k - \nabla f_k$, so that
\[\div \mathbf v_k = 0, \text{ and } \int_{\Omega \setminus X} \vert \gsym(\tilde{\mathbf{v}}_k)\vert_F - \vert \gsym(\mathbf{v}_k)\vert_F \to 0.\]
\end{proof}

Note that the constraint on the average flux through the particle $X$, i.e.
\begin{equation}
\fint_{X} \mathbf{v} \cdot \boldsymbol{e}_z = -1,
\label{eq:normalize}
\end{equation}
can be added thanks to the invariance of the quotient \eqref{eq:quotient} to scalar multiplication. We can therefore restrict the minimization to this space only, or indeed use any other scaling. The above flow rate scaling is that which is used later in the computational algorithm. Below to establish convergence we instead scale with the plastic dissipation.

\begin{thm}\label{thm:profile}
The rescaled velocity $\tilde{\mathbf{v}}_Y$:
 \begin{equation}\label{eq:velrescale}
  \tilde{\mathbf{v}}_Y := \frac{\mathbf{v}_Y}{\int_{\Omega \setminus X} \vert\dot{\boldsymbol{\gamma}}(\mathbf{v}_Y)\vert_F },
\end{equation}
converges as $Y \to Y_c^-$, to a minimizer in $\BD_\diamond$ of
\[ \frac{\int_{\Omega \setminus X} \vert\dot{\boldsymbol{\gamma}}(\mathbf{v})\vert_F}{-\int_{X} \mathbf{v}\cdot \boldsymbol{e}_z} . \]
\end{thm}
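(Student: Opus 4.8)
The plan is to recognize the rescaled family $\tilde{\mathbf{v}}_Y$ as a minimizing sequence for the quotient \eqref{eq:quotient} as $Y \to Y_c^-$, and then to run the direct method in $\BD(\Omega)$ using the compactness of Theorem \ref{thm:compact}. First I would record two elementary properties of $\tilde{\mathbf{v}}_Y$. By its definition $\int_{\Omega \setminus X} |\dot{\boldsymbol{\gamma}}(\tilde{\mathbf{v}}_Y)|_F = 1$, and since $\tilde{\mathbf{v}}_Y \in H_0^1(\Omega)^3$ carries no singular strain, also $\TD(\tilde{\mathbf{v}}_Y) = 1$. On the other hand, dividing the asymptotic relation $-\int_X \mathbf{v}_Y \cdot \boldsymbol{e}_z \sim Y \int_{\Omega\setminus X}|\dot{\boldsymbol{\gamma}}(\mathbf{v}_Y)|_F$ recorded just above the statement by $\int_{\Omega\setminus X}|\dot{\boldsymbol{\gamma}}(\mathbf{v}_Y)|_F$ gives $-\int_X \tilde{\mathbf{v}}_Y \cdot \boldsymbol{e}_z \to Y_c$. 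Hence the value of the quotient \eqref{eq:quotient} at $\tilde{\mathbf{v}}_Y$ tends to $1/Y_c$, which by Theorem \ref{thm:relaxation} is its minimal value over $\BD_\diamond$. Note $\tilde{\mathbf{v}}_Y$ is normalized by $\TD = 1$ rather than by $\fint_X \,\cdot\, \boldsymbol{e}_z = -1$; since the quotient is invariant under scalar multiplication this is harmless, and the limit will be returned to $\BD_\diamond$ by a single fixed rescaling.

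\textbf{Compactness and the limit.} Extending $\tilde{\mathbf{v}}_Y$ by zero to $\R^3$ keeps its total deformation equal to $1$; the $\BD$ Sobolev embedding \cite[Thm.~II.2.2]{Tem85} (already used in the proof of Theorem \ref{thm:relaxation}) and boundedness of $\Omega$ give $\|\tilde{\mathbf{v}}_Y\|_{L^1} \leqslant C \|\tilde{\mathbf{v}}_Y\|_{L^{3/2}} \leqslant C' \TD(\tilde{\mathbf{v}}_Y) = C'$, so $\|\tilde{\mathbf{v}}_Y\|_{\BD(\Omega)}$ is bounded uniformly for $Y < Y_c$. Theorem \ref{thm:compact} then yields, along any sequence $Y_n \to Y_c^-$, a limit $\hat{\mathbf{v}}$ with $\tilde{\mathbf{v}}_{Y_n} \xrightarrow{L^1} \hat{\mathbf{v}}$, $\dot{\boldsymbol{\gamma}}(\tilde{\mathbf{v}}_{Y_n}) \stackrel{\ast}{\rightharpoonup} \dot{\boldsymbol{\gamma}}(\hat{\mathbf{v}})$, and $\TD(\hat{\mathbf{v}}) \leqslant \liminf \TD(\tilde{\mathbf{v}}_{Y_n}) = 1$. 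I would then check that each constraint defining $\BD_\diamond$ is preserved: $\hat{\mathbf{v}} = 0$ a.e.\ outside $\Omega$ and $\boldsymbol{\nabla}\cdot\hat{\mathbf{v}} = 0$ from the $L^1$ convergence; $\dot{\boldsymbol{\gamma}}(\hat{\mathbf{v}}) = 0$ on $X$ by testing against smooth symmetric-matrix fields supported in $X$ and using the weak-$\ast$ convergence of the strain measures; and $-\int_X \hat{\mathbf{v}}\cdot\boldsymbol{e}_z = \lim_n \big(-\int_X \tilde{\mathbf{v}}_{Y_n}\cdot\boldsymbol{e}_z\big) = Y_c > 0$, again from the $L^1$ convergence on the bounded set $X$. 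Rescaling $\hat{\mathbf{v}}$ by the fixed factor that normalizes $\fint_X \,\cdot\, \boldsymbol{e}_z$ to $-1$ then places it in $\BD_\diamond$.

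\textbf{Identification as a minimizer.} Combining the lower semicontinuity of $\TD$ with the continuity of $\mathbf{v} \mapsto \int_X \mathbf{v} \cdot \boldsymbol{e}_z$ under $L^1$ convergence (and the fact that the limiting denominator $-\int_X\hat{\mathbf{v}}\cdot\boldsymbol{e}_z = Y_c$ is strictly positive), the quotient at $\hat{\mathbf{v}}$ is at most
\[ \liminf_n \frac{\int_{\Omega\setminus X}|\dot{\boldsymbol{\gamma}}(\tilde{\mathbf{v}}_{Y_n})|_F}{-\int_X \tilde{\mathbf{v}}_{Y_n}\cdot\boldsymbol{e}_z} = \frac{1}{Y_c}, \]
while it is at least $1/Y_c$ because the rescaled $\hat{\mathbf{v}}$ is admissible; hence $\hat{\mathbf{v}}$ (after rescaling) minimizes \eqref{eq:quotient} in $\BD_\diamond$. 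Equality in this chain also forces $\TD(\hat{\mathbf{v}}) = \lim_n \TD(\tilde{\mathbf{v}}_{Y_n}) = 1$, which upgrades the weak-$\ast$ convergence of the strain measures to convergence of their masses, i.e.\ strict $\BD$ convergence.

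\textbf{Main obstacle.} The delicate point is to pass from this subsequential convergence to convergence of the \emph{entire} family $\tilde{\mathbf{v}}_Y$ as $Y \to Y_c^-$. This is automatic when the relaxed quotient has a unique minimizer in $\BD_\diamond$, but uniqueness need not hold in general; failing it, one must either settle for a subsequential statement or supply a selection principle distinguishing the limit — for instance, arguing that the rescaled viscous dissipation $\tfrac12\int_{\Omega\setminus X}|\dot{\boldsymbol{\gamma}}(\tilde{\mathbf{v}}_Y)|_F^2$, which is controlled via the estimates \eqref{eq:zerograd} and \eqref{eq:conv3}, singles out a particular limiting profile through its own lower semicontinuity. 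A secondary technical check, but one that genuinely requires care, is that the rigid-motion constraint $\dot{\boldsymbol{\gamma}}(\mathbf{v}) = 0$ on $X$ and incompressibility are closed under $\BD$ weak-$\ast$ convergence, and that the flux normalization survives in the limit, which is precisely where $Y_c \neq 0$ is used.
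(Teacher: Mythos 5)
Your proposal is correct and follows essentially the same route as the paper's proof: uniform $\BD$ bound from the normalization $\TD(\tilde{\mathbf{v}}_Y)=1$ plus the Poincar\'e-type inequality, compactness via Theorem \ref{thm:compact}, the energy balance \eqref{eq:zerograd}--\eqref{eq:conv3} to identify $-\int_X \mathbf{v}_c\cdot\boldsymbol{e}_z = Y_c$, and lower semicontinuity of $\TD$ to conclude optimality of the limit. The ``main obstacle'' you flag (whole-family versus subsequential convergence, absent uniqueness of the minimizer) is not resolved in the paper either --- its argument likewise only extracts a subsequential limit from Theorem \ref{thm:compact} --- so your explicit checks of the $\BD_\diamond$ constraints and of the rescaling are, if anything, more careful than the published proof.
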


\begin{proof}
By construction $\tilde{\mathbf{v}}_Y$ has total deformation 1. Moreover, since we have homogeneous Dirichlet boundary conditions on $\partial \Omega$, we have the inequality \cite[II.(1.20)]{Tem85}
\begin{equation}\label{eq:L1bound}\|\tilde{\mathbf{v}}_Y\|_{L^{1}(\Omega)} \ls C \int_{\Omega} \vert\dot{\boldsymbol{\gamma}}(\tilde{\mathbf{v}}_Y)\vert_F = C \int_{\Omega \setminus X} \vert\dot{\boldsymbol{\gamma}}(\tilde{\mathbf{v}}_Y)\vert_F,\end{equation}
so Theorem \ref{thm:compact} implies that $\tilde{\mathbf{v}}_Y$ converge to some $\mathbf{v}_c$ in the space of $\BD$ functions (weakly in $\BD$, and strongly in $L^1$). From the mechanical energy balance (see \eqref{eq:zerograd})
\[
-\int_{X} \mathbf{v}_Y \cdot \boldsymbol{e}_z
-Y\int_{\Omega \setminus X} \vert\dot{\boldsymbol{\gamma}}(\mathbf{v}_Y)\vert_F
=
\int_{\Omega \setminus X} \vert\dot{\boldsymbol{\gamma}}(\mathbf{v}_Y)\vert_F^2
\sim 0,  \]
as $Y \to Y_c^-$; see \eqref{eq:conv3}. Thus, we conclude
\begin{equation}\label{Yc_converge}
\frac{Y\int_{\Omega \setminus X} \vert\dot{\boldsymbol{\gamma}}(\mathbf{v}_Y)\vert_F}{-\int_{X} \mathbf{v}_Y \cdot \boldsymbol{e}_z} \to 1 ~~\Rightarrow~~-\int_{X} \mathbf{v}_c \cdot  \boldsymbol{e}_z  = Y_c.
\end{equation}
Recalling that $\int_{\Omega \setminus X} \vert\dot{\boldsymbol{\gamma}}(\tilde{\mathbf{v}}_Y)\vert_F = 1$, the semi continuity of the total deformation implies $\int_{\Omega \setminus X} \vert\dot{\boldsymbol{\gamma}}({\mathbf{v}}_c)\vert_F  \ls 1$, which yields
$$ Y_c \int_{\Omega \setminus X} \vert\dot{\boldsymbol{\gamma}}({\mathbf{v}}_c)\vert_F  + \int_{X} \mathbf{v}_c \cdot \boldsymbol{e}_z   \ls 0,$$
which can be rewritten as
$$Y_c \ls \frac{-\int_{X} \mathbf{v}_c \cdot \boldsymbol{e}_z}{\int_{\Omega \setminus X} \vert\dot{\boldsymbol{\gamma}}({\mathbf{v}}_c)\vert_F}$$
meaning that $\mathbf{v}_c$ is a maximizer of
$$ \frac{-\int_{X} \mathbf{v} \cdot \boldsymbol{e}_z}{\int_{\Omega \setminus X} \vert\dot{\boldsymbol{\gamma}}({\mathbf{v}})\vert_F}.$$
\end{proof}
We {thus} see that a minimizer of the quotient \eqref{eq:quotient} can be obtained as a rescaled limit of the physical velocity solutions.

\subsubsection{Primal and dual problems in the continuous setting}
\label{sec:dual}
The direct formulation we have proposed involves minimizing the functional $ \int_{\Omega \setminus X} | \dot{\bs{\gamma}} (\mathbf{ v})|_F$, which is not a strictly convex or smooth functional. Our approach for approximating the minimizers will be based on exploring convex duality for this problem, in which one introduces stress variables.

Generically, for a minimization problem of the type
\[ \inf_{v \in V} F(v)+G(Av), \]
where $F$ and $G$ are convex and $A:V \to Y$ is a linear operator between two Banach spaces $V$ and $Y$, we have the weak duality inequality between this problem and its Fenchel dual \cite[Ch. III, Theorem 4.2 and Remark 4.2]{EkeTem76}

\begin{equation} \sup_{y^\ast \in Y^\ast}  -F^\ast( A^\ast y^\ast) - G^\ast (-y^\ast) \label{eq:dualabstract} \ls \inf_{v \in V} F(v)+G(Av).\end{equation}
Here $F^\ast$ and $G^\ast$ are the Fenchel conjugates of $F$ and $G$, defined by
\[ F^*(v^*) := \sup_{v\in V} \{ \scal{v^\ast}{v} - F(v) \}.\]

This inequality means that for a given velocity field $v$, if we can find a dual field $y$ in which this inequality is an equality, this would certify that the velocity field is a minimizer of \eqref{eq:quotient}. This situation is referred to as strong duality.

To explicitly compute this dual problem, we need to define the symbols above and in particular take care of the divergence constraint and boundary conditions. A constraint $v\in K$ can be treated as part of the objective functional, by adding the characteristic function
\[ \rchi_K(v) := \begin{cases} 0 & \text{ if }v \in K \\ +\infty &  \text{ otherwise}. \end{cases}\]
On the other hand, we could also use a Lagrange multiplier to take care of one or more constraints, leading to a saddle point formulation. In what follows, we derive a  formulation based on the deviatoric stress, and the saddle point formulation that will be used in the numerical computations, where incompressibility is treated by a Lagrange multiplier.

\subsection{The dual problem in terms of stress}
We start noting that our problem has the same minimal energy value as minimizing
\begin{equation}\label{eq:H1gsym}\int_{\Omega \setminus X} | \dot{\bs{\gamma}} (\mathbf{ v})|_F\end{equation}
over $\mathbf{v} \in H^1(\Omega \setminus X)^3$ with boundary conditions $\mathbf{v} = \mathbf{v}_0$ on $\partial (\Omega \setminus X)$ and $\div \mathbf{v} = 0$, where $\mathbf{v}_0$ is {a} divergence-free field in $H^1(\Omega \setminus X)^3$ such that its trace on $\partial (\Omega \setminus X)$ equals the desired boundary conditions, but otherwise arbitrary. For regular enough boundary values, as in the cases of interest for us, such a function can be found by solving a {linear} Stokes problem on $\Omega \setminus X$. As discussed in previous sections, it is generally the case that the minimization of \eqref{eq:H1gsym} in subspaces of $H^1(\Omega \setminus X)^3$ does not admit solutions, since the space $H^1$ does not allow the kind of discontinuities that the limiting velocity field presents. However, in the following discussion only the minimal energy values are relevant.

Since the linear subspace $H^1_{\mathrm{div}} \subset H^1(\Omega \setminus X)^3$ of functions with zero divergence is closed, it is itself a Banach space and we can then write the primal problem above as
\begin{equation}\label{eq:primal} \inf_{\mathbf{v} \in H^1_{\mathrm{div}}}  G(\gsym (\mathbf{v})) +  F (\mathbf{v})\end{equation}
with $$ G(\mathbf{q}) = \int |\mathbf{q}|_F \text{, and }  F(\mathbf{v}) = \rchi_{H^1_0 + \mathbf{v}_0}(\mathbf{v}).$$
When defined in $H^1_{\mathrm{div}}$, $\gsym$ maps onto the space of symmetric traceless matrices with $L^2$ coefficients, which we denote by $L^2_\text{sym,tr}$. We denote then the dual variable by $\bs \zeta \in L^2_\text{sym,tr}$. With the definition of the conjugate functions we get
\begin{align*}
G^\ast(\bs \zeta) & = \sup_{\mathbf{q}  \in L^2_{\text{sym,tr}}}  \int \bs \zeta:\mathbf{q} -  \int |\bs \zeta|_{F} = \sup_{\mathbf{q}  \in L^2_{\text{sym,tr}}} \int \left(\bs \zeta : \big(\mathbf{q} - \frac{\bs \zeta}{|\bs \zeta|_{F}}\big)\right) \\
 & = \left \{ \begin{matrix} +\infty &\text{ if }|\bs \zeta |_{F} > 1 \\ 0 &\text{ otherwise,} \end{matrix} \right.
\end{align*}
and for {an auxiliary velocity-type dual variable} $\bs \xi \in (H^1_{\mathrm{div}})^\ast$, assuming that $\div \mathbf v_0 = 0$ we get
\begin{align*} F^\ast(\bs \xi) &= \sup_{\mathbf{v} \in (H^1_0 + \mathbf{v}_0) \cap H^1_{\mathrm{div}}} \int \bs \xi \cdot \mathbf{v} = \sup_{\mathbf{v} \in (H^1_0 + \mathbf{v}_0) \cap H^1_{\mathrm{div}}} \int \bs \xi \cdot(\mathbf{v}-\mathbf{v}_0) + \int \bs \xi \cdot \mathbf{v}_0 \\ &= \sup_{\mathbf{v} \in H^1_0 \cap H^1_{\mathrm{div}}} \int \bs \xi \cdot \mathbf{v} + \int \bs \xi \cdot \mathbf{v}_0 = \left \{ \begin{matrix} \int \bs \xi \cdot \mathbf{v}_0 &\text{ if } \int \bs \xi \cdot \mathbf{v} = 0 \text{ for all } \mathbf{v} \in H^1_0 \cap H^1_{\mathrm{div}}\\ +\infty &\text{ otherwise.}\end{matrix} \right.
\end{align*}
Now, the dual problem
$$ \sup_{\bs{\zeta} \in L^2_{\text{sym,tr}}} -  G^\ast( -\bs \zeta ) -  F^\ast ( \gsym^\ast \bs \zeta ) $$
{is formulated just in terms of $\zeta$, since its standard form \eqref{eq:dualabstract}} involves only $\bs \xi$ of the form $\gsym^\ast \bs \zeta = \div \bs \zeta$, where $\div \bs \zeta$ is understood as a vector distribution in $[\mathcal{C}_0^\infty(\Omega, \R^d)]^\ast$. Therefore, using that $\tr \bs \zeta = 0$ we have
\[F^\ast(\div \bs \zeta) = \left \{ \begin{matrix} \int \bs \zeta : \gsym \mathbf{v}_0 &\text{ if } \int \div \bs \zeta \cdot \mathbf{v} = 0 \text{ for all } \mathbf{v} \in H^1_0\\ +\infty &\text{ otherwise.}\end{matrix} \right.\]
Using the de Rham theorem for distributions \cite[Proposition 1.1]{Tem77}, the condition $\int \div \bs \zeta \cdot \mathbf{v} = 0 \text{ for all } \mathbf{v} \in H^1_0$ means that there exists a scalar function $p \in L^2(\Omega)$ such that
\begin{equation}\label{eq:L2pressure}\div \bs \zeta = \nabla p, \text{ or equivalently, }\curl \div \bs \zeta = 0.\end{equation}
Gathering up the above we have derived the following dual problem in terms of the deviatoric part of the stress:
\begin{equation}\sup_{\substack{\bs \zeta \in L^2_{\text{sym,tr}} \\ \curl \div \bs \zeta = 0 \\ |\bs \zeta|\ls 1}} \int_{\Omega \setminus X} \gsym \mathbf v_0 : \bs \zeta.\label{eq:dualexpl} \end{equation}
Since $G$ is continuous on $L^2_\text{sym,tr}$, this formulation has the advantage of enjoying strong duality \cite[Ch. III, Theorem 4.2 and Remark 4.2]{EkeTem76}, implying the existence of an optimal $\bs \zeta$ and that one can obtain the critical yield number through this dual problem. However, it would be impractical to numerically implement the second order differential constraint $\curl \div \bs \zeta = 0$. We can instead add the hydrostatic contribution found in \eqref{eq:L2pressure} to the stress and formulate the dual problem in terms of the variable $\bs \zeta + p \Id$, to reformulate the dual problem as
\begin{equation}\sup_{\substack{\bs \zeta \in L^2_{\text{sym}} \\ \div \bs \zeta = 0 \\ |\bs \zeta - \frac 13 \tr \bs \zeta \Id| \ls 1}} \int_{\Omega \setminus X} \gsym \mathbf{v}_0 : \bs \zeta,\label{eq:dual} \end{equation}
in which we recognize the characteristics of the complete stress, with no external forces and satisfying the von Mises criterion for its deviatoric part, as should be the case for a Bingham fluid. Let us note that \eqref{eq:dual} could have also been derived directly by considering the whole space $H^1(\Omega \setminus X)^3$, the corresponding linear operator $\gsym$ and the functions
$$\tilde G(\mathbf{q}) = \int |\mathbf{q}|_F + \rchi_{\tr \cdot = 0} (\mathbf{q}) \text{, and }  F(\mathbf{v}) = \rchi_{H^1_0 + \mathbf{v}_0}(\mathbf{v}).$$
However in this formulation $\tilde G$ is not continuous, which would have made proving strong duality less straightforward. In any case one could still use the Attouch-Brezis qualification condition \cite{AttBre86}, which requires the sum of the subspaces $H^1_0$ and $H^1_{\mathrm{div}}$ to be closed. This closedness condition can then be seen to be equivalent to finding $p \in L^2$ as in \eqref{eq:L2pressure} for each dual variable in \eqref{eq:dualexpl}.

Let us further remark, that while $Y_c = \TD(\mathbf{v})$ for some optimal velocity field $\mathbf{v} \in \BD_\diamond$, and also $Y_c = \int_{\Omega \setminus X} \gsym(\mathbf{v}_0) : \bs \zeta$ for some $\bs \zeta \in L^2_{\mathrm{sym}}$ optimal in \eqref{eq:dual}, the energy dissipation rate $\int_{\Omega \setminus X} \gsym(\mathbf{v}) : \bs \zeta$
is \emph{not defined} for these two limit formulations together, since the fields considered are not regular enough.

\subsection{Continuous primal-dual formulation}
Since, as mentioned above, treating $\curl \div \bs \zeta$ as a second order constraint would not be feasible numerically, we now turn to the variant of \eqref{eq:dual} which will be discretized, with a multiplier for the incompressibility constraint. The linear operator we now consider is
$$(\gsym, \div):H^1(\Omega)^3 \to L^2_\text{sym} \times L^2,$$
leading to an additional dual scalar field. The function $F$ of the previous section is maintained, while for the second term we use instead
$$\tilde G(\mathbf{q}, \vartheta) = \int |\mathbf{q}|_F + \rchi_{\{0\}} (\vartheta).$$
Using the observations made at the start of the previous section and \cite[Ch. III, Remark 3.2 and Equations (4.14),(4.15)]{EkeTem76}, we also have strong duality for the resulting problem, so that
\begin{equation}\begin{aligned}
\min_{\mathbf{v} \in \BD_\diamond} \TD(\mathbf{v}) &= \inf_{\mathbf{v} \in H^1} \rchi_{\{0\}}(\div \mathbf{v})+\int |\gsym (\mathbf{v})|_F + \rchi_{H^1_0 + \mathbf{v}_0}(\mathbf{v})\\
&= \inf_{\mathbf{v} \in H^1} \sup_{\substack{\bs \zeta \in L^2_{\text{sym}} \\ p \in L^2}} \int \gsym (\mathbf{v}) : \bs \zeta + \int \div(\mathbf{v})\, p + \rchi_{H^1_0 + \mathbf{v}_0}(\mathbf{v}) - \rchi_{|\cdot|_F \ls 1}(\bs \zeta)\\
& = \sup_{\substack{\bs \zeta \in L^2_{\text{sym}} \\ p \in L^2}} \inf_{\mathbf{v} \in H^1} \int \gsym (\mathbf{v}) : \bs \zeta + \int \div(\mathbf{v})\, p + \rchi_{H^1_0 + \mathbf{v}_0}(\mathbf{v}) - \rchi_{|\cdot|_F \ls 1}(\bs \zeta),
\end{aligned}\label{eq:contprimaldual}\end{equation}
which is the problem that we will discretize for our numerical method. We have maintained the constraints as characteristic functions, since these will be treated by projection in the algorithm. We will prove below that the minimal values of the discrete minimization converge to those of \eqref{eq:contprimaldual}, recovering the critical yield stress and corresponding velocity field.

\section{Computational methods}
\label{sec:computing}

This paper is focused at proposing a general method for computing $Y_c$ \emph{directly}, based on the primal-dual formulation of the previous section. We develop and describe this method below in \S \ref{sec:minimization}. In order to benchmark this method we test against computations carried out on axisymmetric particles, which are computed using the full viscoplastic problem. We outline this method in \S \ref{sec:axisymmetric_calcs}.

\subsection{Direct computation of the yield limit}\label{sec:minimization}

We aim to discretize the problem of minimization of $\int |\dot{ \bs \gamma}(\bs{\cdot})|_F$ in $\BD_\diamond$ under the constraint
$$\mathbf{v}_0 =  -\mathbf{e}_z \quad \text{on } X,$$
to subsequently compute the corresponding discrete minimizers. Let us note that in our situation the functional does not have any term or constraint which is strongly convex or differentiable. This fact, as opposed to the situation when computing non-limiting flows, makes most types of acceleration schemes (see \cite{TreMoyPri16, TreRouFriWac18} in the context of viscoplastic flows) non-applicable. This can be seen as the price to pay for a direct method in which only one limiting field is computed, instead of a fixed approximation by computing Stokes flows. Given this situation, we use the non-accelerated version of the primal-dual proximal splitting algorithm of \cite{ChaPoc11} applied to a discretization of the primal-dual formulation \eqref{eq:contprimaldual}. For it we have chosen (in their notation) the dual step size as $\tau = 0.1$, the primal step size as $\sigma = 1/ \tau / L^2$, where $L^2$ is an estimation of the norm of the discrete operator $(\nabla, \div)$, and the relaxation parameter as $\theta = 1$.

For simplicity we will consider particles $X$ with 3 orthogonal planes of symmetry. Under this assumption, we expect that no rotational torque is applied on the particle and there is a steady settling flow, (see \cite[Sec.~4]{Wei72} for a proof in the case of Stokes flow), and if gravity is oriented along a symmetry plane then there is no \emph{drift}, i.e.~the particle settles vertically. The minimizers represent the yield limit, as discussed. However, we may consider 3-dimensional particles without requiring cylindrical symmetry and hence assess the method. The symmetry allows us then to perform numerical computations just on an octant, which gives some computational economy.

Since the fields we want to compute are in general discontinuous, we have opted for a finite difference discretization on a rectangular grid $G=\{1,\ldots, n_t\}\times\{1,\ldots n_t\}\times\{1,\ldots n_z\}$, where due to the symmetry we have chosen identical sizes $n_t$ in the two directions orthogonal to gravity. Choosing the grid sizes $(n_t, n_t, n_z)$ is not a trivial matter: while we expect the velocities to be supported in a bounded set and therefore completely vanish before they reach $\partial \Omega$, we don't know a priori how big the unyielded set is. A computational domain that is too small to fit the unyielded region will modify the resulting flow through the  conditions imposed on its boundary, and one that is too big will waste computational effort and limit the attainable precision. In practice, we have aimed to use only one resolution parameter $n = \max(n_t, n_z)$ for easier comparison with similar computational costs across different shapes, and denoted the resulting grid by $G^n$. The other parameter is then computed using a heuristic based on the aspect ratio of the particle, which although it has no theoretical guarantees, has worked in all of our experiments. The idea of our heuristic is to use the results of \cite{ChaFri17} where the slipline theory of perfect plasticity \cite[Sec.~5.1.1]{Lub08} is used to find unyielded envelopes of particles in the limiting 2D flow. In particular, there it is obtained that the unyielded envelope of a square falling with a diagonal aligned with gravity should be given by a pair of circles with the sides of the square as radii. The heuristic is then to enclose the particle in such a square, and size the grid slightly larger than the computed unyielded region.

We can now write the discrete primal-dual formulation. The constraint $\int_{\Omega} \mathbf{v}_0 \cdot \mathbf{e}_z = 0$ is seen as a Lagrange multiplier, as is the incompressibility, whereas the conditions $\mathbf{v}_0 = 0$ on $\partial \Omega$ and $\mathbf{v}_0 = -\mathbf{e}_z$ on the particle are encoded as indicatrix functions, such that the backward step is a projection. Finally, the functional to optimize on is
\begin{equation}\label{eq:discmin}\begin{gathered}
\min_{\mathbf{v} \in S^3}  \rchi_{K^n}(\mathbf{v}) + \rchi_{\{\div^n_C = 0\}}( \mathbf v) + \sum_{(i,j,k) \in G^n} |\gsym^n(\mathbf{v})^{ijk}|_F \\
=\min_{\mathbf{v} \in S^3} \max_{\substack{\mathbf{q} \in S^{12} \\ \mkern-12mu \omega \in \mathbb R \\ \mkern-12mu p \in S}} \rchi_{K^n}(\mathbf{v}) + \sum_{(i,j,k) \in G^n} \gsym^n(\mathbf{v})^{ijk} : \mathbf{q}^{ijk} - \rchi_{\{|\cdot|_{\infty} \leq 1\}}(\mathbf{q}^{ijk})-\omega\,\mathbf{v}^{ijk}_3-p^{ijk} \left( \div^n_C \mathbf{v}\right)^{ijk} \\ =
\max_{\substack{\mathbf{q} \in S^{12} \\ \mkern-12mu \omega \in \mathbb R \\ \mkern-12mu p \in S}} \min_{\mathbf{v} \in S^3} \rchi_{K^n}(\mathbf{v}) + \sum_{(i,j,k) \in G^n} \gsym^n(\mathbf{v})^{ijk} : \mathbf{q}^{ijk} - \rchi_{\{|\cdot|_{\infty} \leq 1\}}(\mathbf{q}^{ijk})-\omega\,\mathbf{v}^{ijk}_3-p^{ijk} \left( \div^n_C \mathbf{v}\right)^{ijk}.
\end{gathered}\end{equation}
Notice that for this discrete problem we do have strong duality, so that we can indeed write the equality above. Since the domains of the functions involved are the linear subspaces $K^n$ and $\{\mathbf v \in S^3 | \div^n_C \mathbf{v} = 0 \}$, and the relative interior of a subspace is the subspace itself, it is enough \cite[Cor.~31.2.1]{Roc70} to notice that there is at least one discrete velocity field $\mathbf v_0$ with $\div^n_C \mathbf{v}_0=0$ which also satisfies the boundary conditions. Such a $\mathbf v_0$ can be found, for example, by using the projection of Lemma \ref{lem:discdivproj} in the appendix.

In \eqref{eq:discmin}, $S=\R^{n_t n_t n_z}$ denotes the space of real-valued scalar discrete functions defined on the nodes of the grid $G^n$, with corresponding vector fields in $S^3$ and pairs of symmetric tensors in $S^{12}$, with superscripts $\mathbf{v}^{ijk} \in \R^3$ denoting values at gridpoints. The discrete constraint set $K^n$ (using only one grid size parameter $n$ as described for the construction of $G^n$) is defined as
\[K^n=\left\{ \mathbf{v} \in S^3 \,\middle \vert\, \mathbf{v} = 0 \text{ on } G^n \setminus \Omega^n, \text{ and } \mathbf{v} = {-}(0,0,1) \text{ on }X^n \cap G^n\right\}\]
where $\Omega^n$ and $X^n$ denote the discretized domain and particle, to be defined in \eqref{eq:discretedomain} and \eqref{eq:discretesolid} below. The discrete divergence $\div^n_C$ is defined with centered differences, so for $\mathbf{v}^{ijk} = \left[ v_1^{ijk}, v_2^{ijk}, v_3^{ijk}\right]$ we have
\[\left(\div^n_C \mathbf{v}\right)^{ijk} = n\left[\left( v_1^{i+1,j,k}-v_1^{i-1,j,k}\right)+\left( v_2^{i,j+1,k}-v_1^{i,j-1,k}\right)+\left( v_3^{i,j,k+1}-v_3^{i,j,k-1}\right)\right],\]
where the factor $n$ is the inverse of the discretization step $1/n$. We have used a multiplier $\omega$ for the constraint $\sum \mathbf v_3 = 0$ which is redundant, since it is implied by $\div^n_C \mathbf v = 0$ and $\mathbf v \in K^n$, but was observed to slightly enhance the convergence of the method. Finally, $\dot{ \bs \gamma}(\mathbf{v})^{ijk}$ stands for a discrete symmetric gradient suitable for capturing the total deformation of jumps, whose choice we now discuss.

\subsubsection{Discretization}
We discretize the problem using a generalization of the ``upwind'' scheme of \cite{ChaLevLuc11} which has the advantage of carrying a high degree of isotropy. The discrete velocity is denoted by $\mathbf{v}^{ijk} = \left[ v_1^{ijk}, v_2^{ijk}, v_3^{ijk}\right].$ The expressions for the discrete gradient and the corresponding divergence parallel those of \cite{ChaLevLuc11}, and we define for $\alpha = 1,2,3$
\begin{align*}
& \left( (\nabla \mathbf{v})^{ijk}_{\alpha,1,+}, \ (\nabla \mathbf{v})^{ijk}_{\alpha,1,-}, \ (\nabla \mathbf{v})^{ijk}_{\alpha,2,+}, \ (\nabla \mathbf{v})^{ijk}_{\alpha,2,-}, \ (\nabla \mathbf{v})^{ijk}_{\alpha,3,+}, \ (\nabla \mathbf{v})^{ijk}_{\alpha,3,-} \right) :=\\
&\quad n\,\left(v_\alpha^{i+1,j,k}-v_\alpha^{i,j,k}, \  v_\alpha^{i-1,j,k} - v_\alpha^{i,j,k},\ v_\alpha^{i,j+1,k} - v_\alpha^{i,j,k},\ v_\alpha^{i,j-1,k} - v_\alpha^{i,j,k}, \ v_\alpha^{i,j,k+1} - v_\alpha^{i,j,k}, \ v_\alpha^{i,j,k-1} - v_\alpha^{i,j,k}  \right),
 \end{align*}
where we remark that we have both forward and backward differences, doubling the number of components needed, bringing the total up to $18$ in three dimensions. From which can define the components of the discrete symmetrized gradient $\gsym(\mathbf{v})^{ijk}$ for $\alpha, \beta = 1,2,3$ by
\[\gsym(\mathbf{v})^{ijk}_{\alpha, \beta, \pm} = (\nabla \mathbf{v})^{ijk}_{\alpha,\beta,\pm} + (\nabla \mathbf{v})^{ijk}_{\beta,\alpha,\pm},\]
which implies that at each grid point $(i,j,k)$ the values of the discrete symmetrized gradient and of its multiplier $\gsym(\mathbf{v})^{ijk}, \mathbf{q}^{ijk} \in \R^{12}$.

It is important to use a discretization that takes into account derivatives in all directions as equally as possible, since we aim to resolve sharp geometric interfaces of the discontinuous flows. Illustrations of the directional behaviour of different finite difference schemes when finding interfaces in the anti-plane case can be found in \cite{IglMerSch18b}, where the discretization of \cite{ChaLevLuc11} is found to be particularly symmetric, as expected by its construction. We remark in any case that when only forward differences are used, The geometry of the interfaces is distorted according to their orientation. Using centered differences is also not adequate, since the centered difference operator has a nontrivial kernel and our solutions are constant in large parts of the domain.

\subsubsection{Discrete boundary conditions and convergence}\label{sec:convergence}

We now define the discrete domain $\Omega^n$ and particle $X^n$, and formulate a convergence result to demonstrate that the chosen discretization and penalization scheme is consistent and correctly accounts for the boundary conditions in the limit. The definitions and analysis follow the same lines as for the anti-plane case as presented in \cite{IglMerSch18b}. In this section, for simplicity, we assume that $n_t=n_z=n$ and $\Omega \Subset (0,1)^3$. We introduce the (continuous) rectangle
$$R_{ijk}^n := \frac 1n \left( i-\frac 12, i+ \frac 12 \right) \times \left( j-\frac 12, j+ \frac 12 \right)\times \left( k-\frac 12, k+ \frac 12 \right).$$
First, we need to decide which constraint to use in the discrete setting. We denote by $$R_{ijk}^n - B\left(\frac 1n\right) := \left\{x \in R_{ijk}^n \, \middle\vert\,  d\big(x,\partial R_{ijk}^n\big) > \frac 1n \right\}.$$ Our choice is to take
\begin{equation}\label{eq:discretesolid}X^n := \bigcup_{R_{ijk}^n \subset X - B(\frac 1n)} R_{ijk}^n\end{equation}
whereas
\begin{equation}\label{eq:discretedomain}\Omega^n := [0,1]^3 \setminus \left( \bigcup_{R_{ijk}^n \subset ([0,1]^3 \setminus \Omega) - B(\frac 1n)} R_{ijk}^n \right),\end{equation}
such that the discrete constraints are less restrictive than the continuous ones (so that the derivatives at the object and particle boundaries are taken into account) and
\begin{equation}\label{eq:discompact}\overline{X^n} \Subset X, \quad  \overline{[0,1]^3 \setminus \Omega^n} \Subset [0,1]^3 \setminus \Omega.\end{equation}
We define $\TD^n$ by analogy with $\mathrm{TV}^n$ in \cite{ChaLevLuc11}, when the function is piecewise constant on the $R_{ijk}^n$ and $+\infty$ otherwise, which leads to
$$\TD^n := \frac{1}{n^3} \sum_{(i,j,k) \in G^n} |\dot{ \bs \gamma}(\mathbf{v})^{ijk} \vee 0|_F$$
with $\dot{ \bs \gamma}(\mathbf{v})^{ijk} \vee 0$ denoting the positive components of the discrete shear rate $\dot{ \bs \gamma}(\mathbf{v})^{ijk}$.

This discretization is consistent with the continuous definitions, as reflected in the following result, proved in \ref{sec:conv}.
\begin{thm}\label{thm:conv}The discretization used converges in the sense of $\Gamma$-convergence with respect to the $L^1$ topology, that is
$$ \TD^n + \rchi_{C^n} + \rchi_{\{\div^n_C = 0\}} \xrightarrow{\Gamma-L^1} \TD + \rchi_C+ \rchi_{\{\div = 0\}}$$
where $$C^n := \left\{\mathbf{v}={-}(0,0,1) \text{ on } X^n,\, \mathbf{v}=0 \text{ on } G^n \setminus \Omega^n\right\}$$ and $$C := \left\{\mathbf{v}={-}(0,0,1) \text{ on } X,\, \mathbf{v}=0 \text{ on } [0,1]^3 \setminus \Omega)\right\}.$$
\end{thm}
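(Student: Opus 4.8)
The plan is to establish the two inequalities required for $\Gamma$-convergence — the $\liminf$ inequality (lower bound) and the existence of a recovery sequence (upper bound) — separately, treating the three summands ($\TD$, the boundary-condition constraint, and the divergence constraint) together since they all interact through the $L^1$ topology. Throughout I would exploit the fact, established for the anti-plane case in \cite{IglMerSch18b}, that the ``upwind'' discretization of \cite{ChaLevLuc11} $\Gamma$-converges to the total variation; the vector-valued symmetrized-gradient version $\TD^n \xrightarrow{\Gamma} \TD$ follows by applying that scalar result componentwise to the $12$ entries of $\gsym(\mathbf v)^{ijk}$ and using that the Frobenius norm is a fixed norm on $\R^{12}$, so that lower semicontinuity and the mollification-based recovery sequence for $\mathrm{TV}^n$ transfer directly. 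The genuinely new content is checking that adding the constraints $\rchi_{C^n}$ and $\rchi_{\{\div^n_C=0\}}$ does not destroy either inequality, which is exactly where the careful definitions \eqref{eq:discretesolid}, \eqref{eq:discretedomain} and the compact nesting \eqref{eq:discompact} are used.

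For the $\liminf$ inequality: given $\mathbf v_n \to \mathbf v$ in $L^1$ with $\liminf [\TD^n(\mathbf v_n)+\rchi_{C^n}(\mathbf v_n)+\rchi_{\{\div^n_C=0\}}(\mathbf v_n)]$ finite, I would pass to a subsequence realizing the $\liminf$, along which $\mathbf v_n \in C^n$ and $\div^n_C \mathbf v_n = 0$ and $\TD^n(\mathbf v_n)$ is bounded. The $\liminf$ part of the scalar $\Gamma$-convergence gives $\TD(\mathbf v) \le \liminf \TD^n(\mathbf v_n)$. For the constraints, one notes that $\mathbf v_n = -(0,0,1)$ on $X^n$ and $\overline{X^n}\Subset X$ with $X^n$ exhausting $X$ from inside as $n\to\infty$; together with $L^1$ convergence (and, if needed, the BD compactness of Theorem \ref{thm:compact} to upgrade to convergence of traces / pointwise a.e. along a further subsequence) this forces $\mathbf v = -(0,0,1)$ a.e.\ on $X$, and similarly $\mathbf v = 0$ a.e.\ on $[0,1]^3\setminus\Omega$, so $\mathbf v \in C$. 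The divergence constraint passes to the limit because $\div^n_C$ is a consistent finite-difference approximation of $\div$: testing $\div^n_C\mathbf v_n = 0$ against a fixed smooth compactly supported scalar and using $L^1$ convergence of $\mathbf v_n$ yields $\int \mathbf v \cdot \nabla \varphi = 0$ for all such $\varphi$, i.e.\ $\div \mathbf v = 0$ distributionally. Hence the limit functional is no larger, giving the lower bound.

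For the recovery sequence (the $\limsup$ inequality): given $\mathbf v$ with $\TD(\mathbf v)+\rchi_C(\mathbf v)+\rchi_{\{\div=0\}}(\mathbf v)<\infty$, so $\mathbf v \in \BD_\diamond$-type with $\div \mathbf v = 0$, $\mathbf v = -(0,0,1)$ on $X$, $\mathbf v = 0$ outside $\Omega$, I would first produce a smooth approximation $\mathbf v_\delta$ with $\div \mathbf v_\delta = 0$, $\TD(\mathbf v_\delta)\to\TD(\mathbf v)$, and $\mathbf v_\delta$ still equal to $-(0,0,1)$ near $X$ and $0$ near $\R^3\setminus\Omega$ — this is precisely the divergence-free mollification construction already carried out in the proof of Theorem \ref{thm:relaxation} (partition of unity, mollification, correction by $\nabla f$ solving a Poisson problem), adapted so that the mollification radius is smaller than the gap in \eqref{eq:discompact}; the strictness $\overline{X^n}\Subset X$ guarantees there is room to keep the boundary values exact. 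Then I would sample $\mathbf v_\delta$ on the grid $G^n$ (with $\delta = \delta(n)$ chosen so the grid is fine enough to resolve $\mathbf v_\delta$): for smooth divergence-free fields, $\div^n_C$ applied to the sampled field is $O(1/n)$ rather than exactly zero, so a small additional correction — e.g.\ the discrete Leray-type projection of Lemma \ref{lem:discdivproj} — restores $\div^n_C = 0$ while perturbing the field, and hence $\TD^n$, by a vanishing amount and without disturbing the (interior, by \eqref{eq:discretesolid}) boundary nodes. The consistency of $\TD^n$ on smooth functions, $\TD^n(\mathbf v_\delta|_{G^n}) \to \TD(\mathbf v_\delta)$, then gives $\limsup_n [\TD^n+\rchi_{C^n}+\rchi_{\{\div^n_C=0\}}](\mathbf v_n) \le \TD(\mathbf v_\delta)$, and a diagonal argument in $\delta,n$ closes the bound.

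The main obstacle I expect is the simultaneous handling of the divergence constraint in the recovery sequence: neither the mollified field nor its grid sampling is exactly discretely divergence-free, and one must show the two successive corrections (continuous $\nabla f$ correction, then discrete projection) can be made with costs in $\TD$/$\TD^n$ that both vanish, \emph{and} that they leave the constrained boundary nodes untouched so that membership in $C^n$ is preserved — this requires the correctors to be supported away from $\partial X^n$ and $\partial\Omega^n$, which is where the strict inclusions \eqref{eq:discompact} and the properties of Lemma \ref{lem:discdivproj} are essential. The $\liminf$ side is comparatively routine once the scalar result of \cite{IglMerSch18b} and Theorem \ref{thm:compact} are invoked.
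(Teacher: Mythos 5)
Your overall architecture for the recovery sequence (mollify, keep the constraint sets via \eqref{eq:discompact}, sample on the grid, restore the discrete divergence constraint with the projection of Lemma \ref{lem:discdivproj}, then diagonalize in $\delta$ and $n$) is essentially the paper's argument, and your treatment of the constraints in the liminf (testing $\div^n_C \mathbf v^n=0$ against smooth $\psi$, and passing the boundary conditions to the limit by $L^1$ convergence) also matches. However, there is a genuine gap at the heart of the liminf inequality: you propose to obtain $\Gamma$-$\liminf$ of $\TD^n$ ``componentwise'' from the scalar upwind result of \cite{IglMerSch18b,ChaLevLuc11}. This does not work, because the discrete functional is $\frac1{n^3}\sum_{ijk}|\gsym(\mathbf v)^{ijk}\vee 0|_F$: the Frobenius norm couples the $12$ components at each node, so a componentwise application of the scalar result would at best give lower semicontinuity of the anisotropic functional in which the matrix norm is the sum of absolute values of the entries, not $|\cdot|_F$; one cannot recover the Frobenius-norm total deformation as a lower bound that way. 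Moreover, the componentwise reduction silently skips the structural issue created by the upwind ``$\vee\,0$'' operation: the natural dual characterization of $\TD^n$ involves \emph{pairs} of componentwise-nonnegative symmetric tensor fields with $|\mathbf q_1|_F^2+|\mathbf q_2|_F^2\ls 1$, and to conclude $\liminf\TD^n(\mathbf v^n)\gs\TD(\mathbf v)$ one must prove that this restricted dual class still computes the full total deformation, i.e.\ $\TD=\TD^+$ as in \eqref{eq:TDpos}. That is exactly Lemma \ref{lem:tdpos} in the paper, whose proof (smoothing the near-optimal dual field around its zero level set so it splits into two smooth nonnegative parts, with quantitative control of the loss) is the main new technical ingredient in the vectorial case and is absent from your plan.

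Two smaller points. First, in the recovery sequence the continuous partition-of-unity plus $\nabla f_k$ correction borrowed from Theorem \ref{thm:relaxation} is both unnecessary and risky here: plain mollification already preserves $\div \mathbf v_\delta=0$ (convolution commutes with derivatives), whereas a global Poisson correction would in general destroy the exact values $-(0,0,1)$ on $X^n$ and $0$ outside $\Omega^n$. Second, your stated consistency $\div^n_C$ of order $O(1/n)$ is too weak to close the estimate: combining it with the $\ell^2$ bound of Lemma \ref{lem:discdivproj} and the factor $n$ from the discrete gradient gives an $O(1)$ perturbation of $\TD^n$, not a vanishing one. One needs the second-order bound $\|\div^n_C\mathbf v^n_\delta\|_\infty\ls Cn^{-2}$, valid because the sampled field comes from a smooth, exactly divergence-free $\mathbf v_\delta$ and the differences are centered, which is what makes the chain \eqref{eq:tdproj} produce an $O(n^{-1})$ error.
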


This result implies that sequences of exact discrete minimizers converge in the $L^1$ topology to minimizers in $\BD$ of the continuous problem (see Corollary \ref{cor:convmin} in the appendix). However, the method used does not allow us to prove any convergence rates. { We will come back to this point later in Fig.~\ref{fig:schematic_grid}.}

\subsection{Axisymmetric computations using the full viscoplastic flow}
\label{sec:axisymmetric_calcs}

The basic methods that we have used for the full viscoplastic flow computations use a finite element (FE) discretisation of the Stokes equations and solution via the augmented Lagrangian method. The FE discretisation is coupled with adaptive meshing, as developed by Saramito and co-workers \cite{roquet2003}. Modification for problem [M] follows the method in \cite{putz2010}. The implementation and mesh adaptation is conducted in C++ open-source environment---FreeFEM++ \cite{MR3043640}. In the previous works, we have benchmarked this implementation \cite{chaparian2017yield,chaparian2017cloaking,chaparian2018inline,chaparian2019porous,chaparian2019adaptive} where we have performed extensive computations, more specifically 2D planar flows around particles of different shapes and orientations.

For the yield limit problem in \cite{chaparian2017yield}, we have approximated the yield limit either by solving [M] or [R] problems (see \S \ref{sec:problem}), and iteratively increasing $Y$ or $B$, respectively. In the former case $Y_c$ is defined as the value of $Y$ at which the velocity becomes zero. When using problem [R] we have found that an acceptable approximation to $Y_c$ (via the limiting plastic drag coefficient mapping) is typically achieved for $B \gtrsim 10^4$.

In this paper we have adapted the method further, namely to computing axisymmetric flows. The computational framework is very similar to the 2D planar flows with minor modifications, as we now have also a hoop stress. Within the iterative loop of the Uzawa algorithm, at each iteration we solve a (linear) axisymmetric Stokes flow problem, using the existing FreeFEM++ implementation. It is necessary to modify the other steps of the algorithm, by introducing extra variables for the hoop stress (Lagrange multiplier) and the corresponding strain rate. For the results in this paper we have used problem [R] and continuously increased the Bingham number ($B \to \infty$) until the asymptotic value is achieved for the plastic drag coefficient ($C_d^P \to C_{d,c}^P$) or equivalently an asymptotic value of the plastic dissipation, $j(\boldsymbol{u}^*)$. We illustrate this procedure in Fig.~\ref{fig:Mesh}.

\begin{figure}[ht]
\centering
\includegraphics[width=0.9\textwidth]{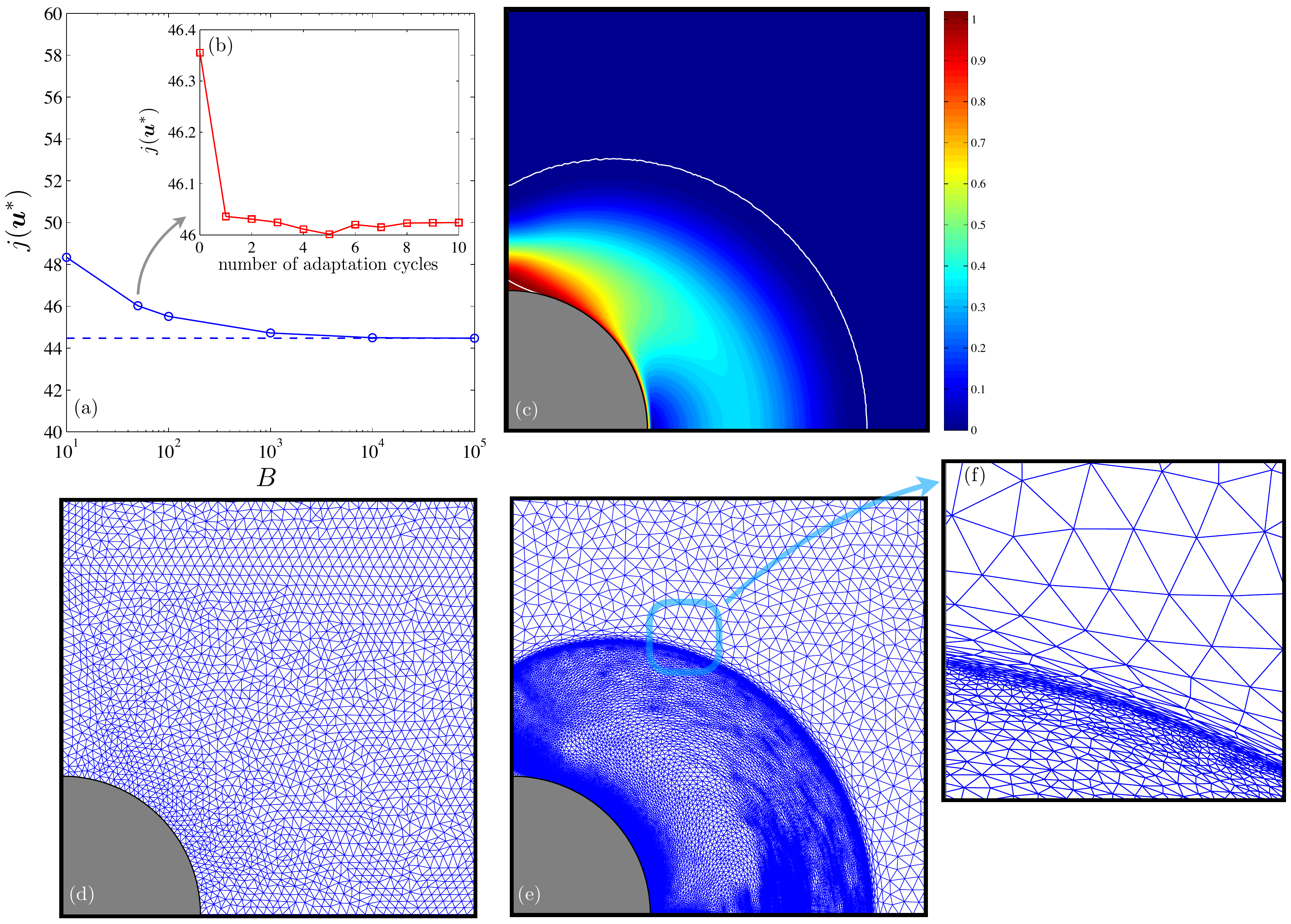}
\caption{[R] problem performance for computing the yield limit of a sphere: (a) plastic dissipation in the [R] problem, $j(\boldsymbol{u}^*)$ versus $B$; (b) $j(\boldsymbol{u}^*)$ versus the number of mesh adaptation cycles; (c) contour of $\vert \boldsymbol{u}^* \vert$; (d) initial mesh; (e) mesh after 10 cycles of adaptation; (f) zoom of the blue window in panel (e). Panels (b-f) correspond to the case $B=50$.}
\label{fig:Mesh}
\end{figure}

Panel (a) shows how $j(\boldsymbol{u}^*)$ asymptotes to its limiting value for a sample computation around a sphere. As with the planar flows, $B \gtrsim 10^4$ produces acceptable results. For the case $B=50$, panel (b) shows the convergence against the cycles of adaptation to collect the computational data. The magnitude of the flow velocity is contoured in panel (c), the initial mesh in panel (d) and the adapted mesh after 10 cycles in panel (e), with a zoom on the yield surface resolution in the panel (f).

\section{Results}
\label{sec:results}

We first present comparative results between the full viscoplastic flow computations and those of the direct method. These comparisons are confined to axisymmetric particles. Having established that the direct method is effective, we present new results on non-axisymmetric particles, to act as benchmarks for future computation.

\subsection{Axisymmetric particles}

Creeping flow of yield-stress fluid around a sphere has been studied many times and so we give a brief historical review. It seems that \cite{volarovich1953theory} were the first who found that if a particle moves within a viscoplastic medium then it should do so in a bounded subset of yielded fluid that surrounds the particle. Generally, the stress decay with distance from an isolated particle and hence will fall below the yield stress if we go sufficiently far from the particle. Following that, Andres \cite{andres1960equilibrium} defined the notion of a `sphere of influence' to refer to the yielded fluid around a sphere which is moving within a yield-stress fluid. A series of experiments confirmed that the particle does indeed move inside an unyielded envelope; see \cite{brookes1969drag,valentik1965terminal}. It was also noted from symmetry arguments that some part of rigid material might attach to the particle surface. Attempts were made to predict the shape of the rigid zone attached to the particle and the shape of the outer yield surface, either theoretically or experimentally, e.g.~\cite{ansley1967motion,yoshioka1971creeping}.

\begin{figure}
\centering
\includegraphics[width=0.7\textwidth]{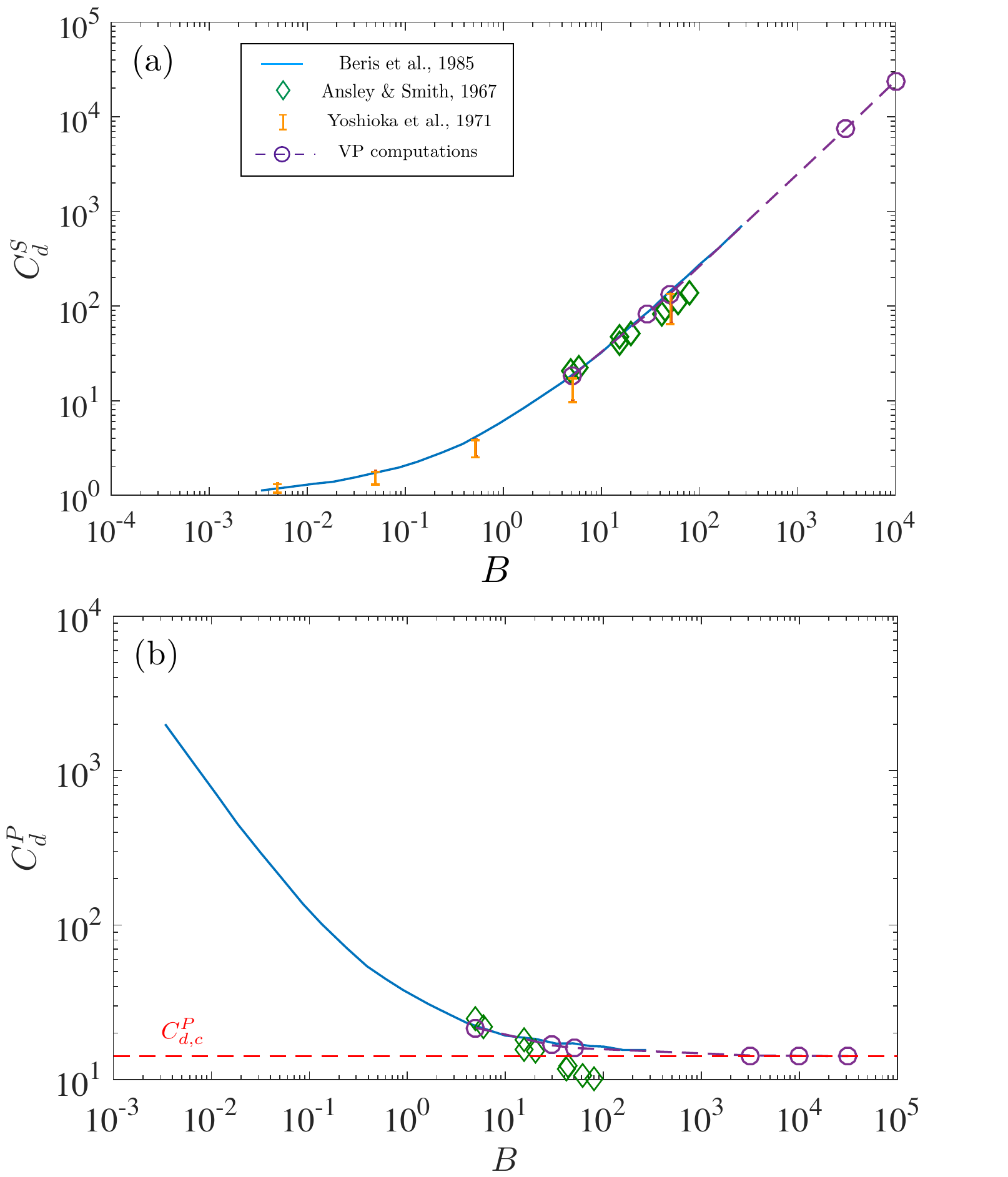}
\caption{Drag coefficients for a sphere: (a) Stokes drag,~(b) Plastic drag. The continuous blue line is the data from \cite{beris1985} and diamonds from \cite{ansley1967motion}. Orange symbols mark the upper and lower bounds calculated by \cite{yoshioka1971creeping} and the dashed line-circles show the present axisymmetric viscoplastic computations.}
\label{fig:SphereDrag}
\end{figure}

Regarding experimental studies, although there is qualitative similarity for moving particles and even some reasonable comparisons of drag coefficients \cite{tabuteau2007drag}, the critical limit is not easily resolved experimentally. This is partly due to the ambiguous nature of yield stress measurements and partly due to other physical complexities at low shear. On the other hand, the limit exists practically speaking as yield stress fluids do manage to suspend small particles indefinitely.

Theoretical questions regarding the yield surface shapes were resolved by Beris et al.~\cite{beris1985}, who performed a rigorous numerical study to find the shapes of these envelopes. Additionally the limiting flow was evaluated accurately. Beyond \cite{beris1985} there have been relatively few studies of axisymmetric particle motion in yield stress fluids that deal explicitly with the yield limit.

\subsection{Limiting viscoplastic flow}

Here we use length-scale $\hat{L} = \hat{R}$, and define the Stokes drag coefficient as:
\begin{equation}\label{eq:Stokesdrag}
C_d^S = \frac{\hat{F}^*}{6 \pi \hat{\mu} \hat{U}_p \hat{R}} = \frac{1}{6 \pi} \int_{\partial X} \boldsymbol{e}_z \cdot \boldsymbol{\sigma}^* \cdot \boldsymbol{n} ~\text{d}S,
\end{equation}
which  represents the ratio of the drag force experienced by a sphere in Bingham fluid ($\hat{F}^*$) compared to that of a viscous fluid $(6 \pi \hat{\mu} \hat{U}_p \hat{R})$. Beris et al. \cite{beris1985} have used a similar idea, however it seems that there is a typo in their paper, since it differs by a factor of $2 \pi$ from expression \eqref{eq:Stokesdrag}. The Stokes drag coefficient is of a less interest in context of yield-stress fluids and especially in the problem of the yield limit, since when $B \to \infty$, also $C_d^S \to \infty$. However, it is of use to benchmark our results with those of \cite{beris1985} in Fig.~\ref{fig:SphereDrag}a. For a sphere, the plastic drag coefficient is:
\begin{equation}
C_d^P = \frac{6}{B}~ C_d^S,
\end{equation}
which can be used to convert data. Figure \ref{fig:SphereDrag}b plots the plastic drag coefficient for a sphere, from which as $B \to \infty$, we can extract the asymptotic values of the plastic drag coefficient, $C_{d,c}^P$. We compute $C_{d,c}^P \sim 14.18$, which is equivalent to $Y_c = 0.094$. Beris et al.~\cite{beris1985} reported $Y_c = 0.0953$. The small difference may find its root in the different numerical methods.

\begin{figure}
\centering
\includegraphics[width=0.5\textwidth]{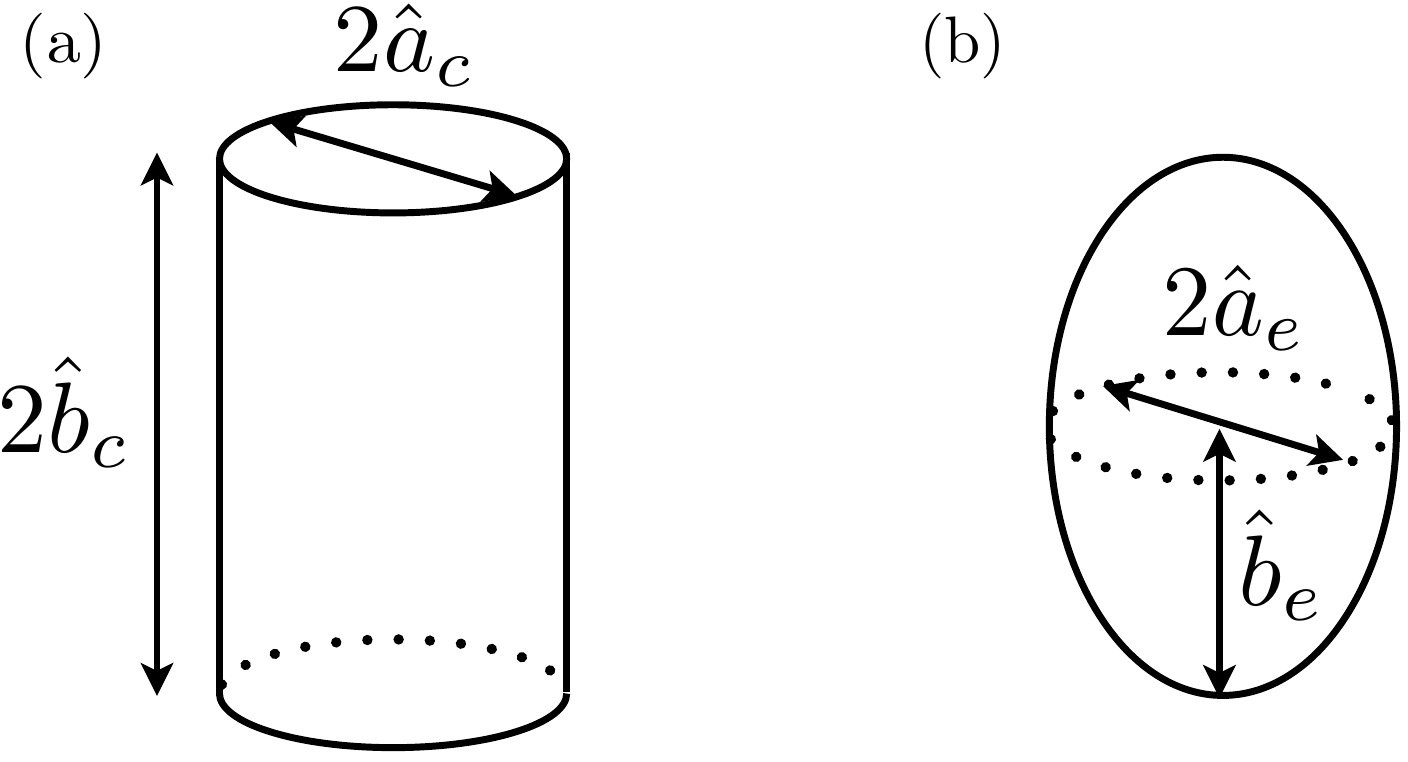}
\caption{Schematic defining the geometry of: (a) a cylindrical particle; (b) an ellipsoidal particle.}
\label{fig:schematic}
\end{figure}

Figure \ref{fig:SphereDrag}b establishes the effectiveness of the full visco-plastic code in computing the critical limits. We have also used the same code to compute the critical limit for 2 other axisymmetric particles: cylinders and ellipsoids, as illustrated in Fig.~\ref{fig:schematic}. These particles are made dimensionless using a volumetrically defined radius, i.e.~$\hat{L} = \left[ (3 \hat{V})/(4 \pi) \right]^{\frac{1}{3}}$, so that for a sphere $\hat{L} = \hat{R}$, meaning that the spherical particle has non-dimensional radius of unity. Table \ref{table:geometry} lists the geometric parameters relevant to these particles and definition of the drag coefficients.

{
As was discussed in section \ref{sec:convergence},  we have not proven any convergence rates for the direct method introduced. Rather, we demonstrate the convergence rate of the direct method by comparison with the viscoplastic flow simulations in some examples. Fig.~\ref{fig:schematic_grid}(a) illustrates how the comparison is performed: we fix the number of nodes in the Cartesian mesh (i.e.~direct method) and also the FE mesh used in the VP simulations. We first calculate a reference critical plastic drag coefficient ($C_{d,c}^{P,*}$), using 10 cycles of adaptive meshing and using increasing values of $B$, i.e.~as in Fig.~\ref{fig:Mesh}. Taking the reference as our exact solution we now calculate the error between $C_{d,c}^{P}$ and $C_{d,c}^{P,*}$, for varying $n_r$. Fig.~\ref{fig:schematic_grid}(b) shows that, on the same number of nodes, the direct method is in fact superior from the perspective of accuracy.}

{ In Fig.~\ref{fig:schematic_grid}(c) we compare the error of the direct method between an ellipsoid and a cylinder. Convergence is poorer for an ellipsoid with the same aspect ratio which suggests that Cartesian mesh is not the best choice for simulating such non-flat particle shapes. As we can see the experimental convergence rate depends on the particle shape: $O(h^{0.8})$ for an ellipsoid and $O(h^{2.5})$ for a cylinder, both with aspect ratio $\rchi=2$ and where $h$ is the mesh size. Dependence of the convergence rate on the orientation of interfaces with respect to that of the mesh is a known phenomenon in the finite difference discretization of the total variation of scalar functions, see \cite{CaiCha20} for a detailed analysis in that case. Equally, for the viscoplastic computations we have only to observe the error in Fig.~\ref{fig:schematic_grid}(b) to see the contribution of mesh adaptivity to calculating the yield limit.}

\begin{figure}
\centering
\includegraphics[width=0.8\textwidth]{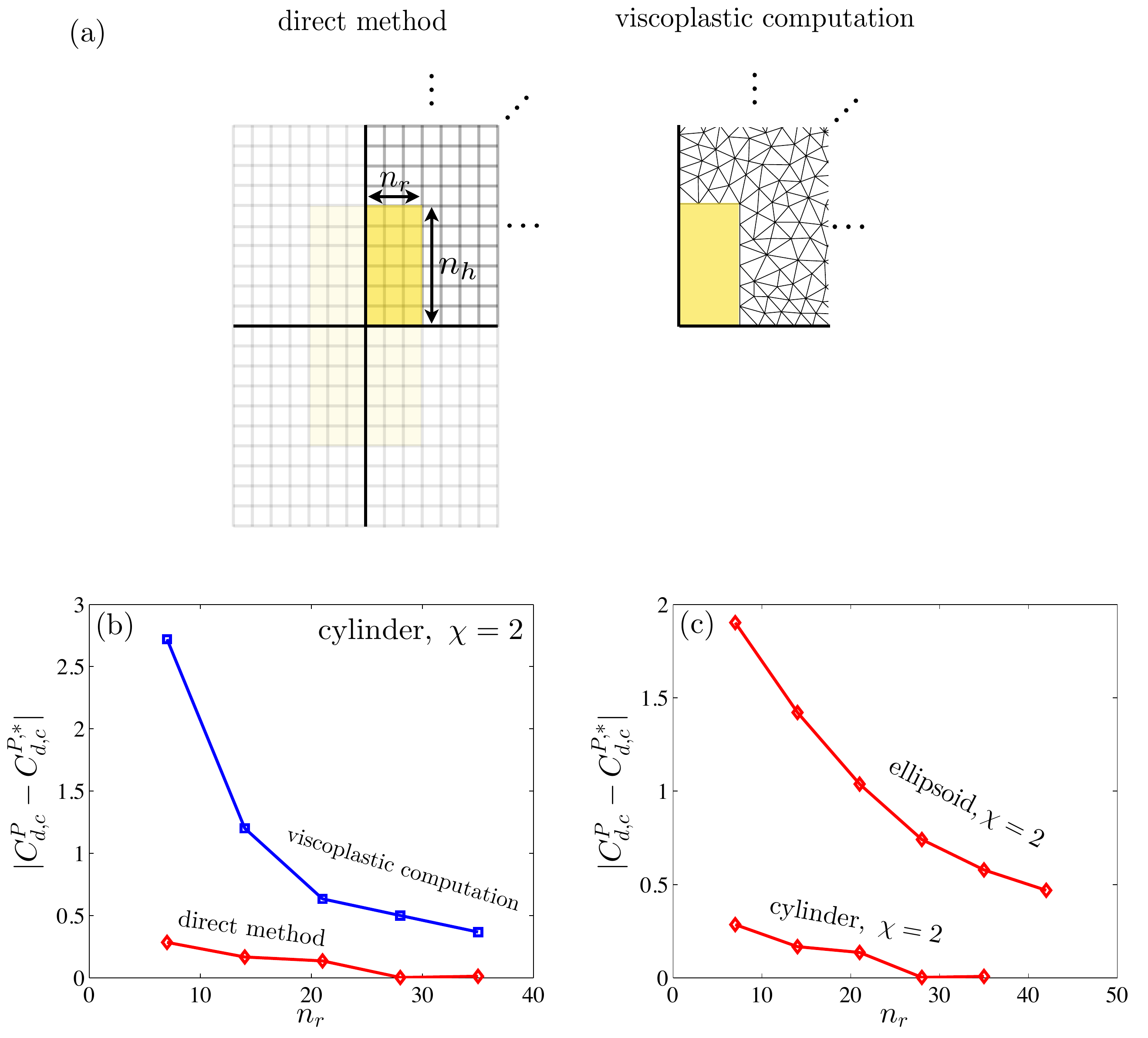}
\caption{{(a) Schematic of the grids in the direct method and the viscoplastic computations,~(b) The panel shows the convergence with respect to the number of the nodes in the radial direction of the particle ($n_r$ as shown in the schematic). Please note that $C_{d,c}^{P,*}$ is the critical plastic drag coefficient extracted from the viscoplastic simulations with the adaptive mesh as the reference point (i.e.~26.1),~(c) Direct method predictions for an ellipsoid and a cylinder with aspect ratio $\rchi=2$. Note that since the aspect ratios are equal to 2, hence $n_h=2 ~n_r$ in these simulations.}}
\label{fig:schematic_grid}
\end{figure}

\begin{table}
  \begin{center}
\begin{tabular}{ l | c | c | c | c }
\hline
\hline
  & Cylinder (axial) & Ellipsoid & Cylinder (trans.) & Parallelepiped \\
\hline
\hline
$\hat{A}_{\bot}$ & $\pi \hat{a}_c^2$ & $\pi \hat{a}_e^2$ & $4 \hat{a}_c \hat{b}_c$ & $4 \hat{a}_p^2$ \\
$\hat{\ell}_{||}$ & $2\hat{b}_c$ & $2\hat{b}_e$ & $2 \hat{a}_c$ & $2 \hat{b}_p$ \\
$\hat{V}$ & $2 \pi \hat{a}_c^2 \hat{b}_c$ & $\frac{4}{3} \pi \hat{a}_e^2 \hat{b}_e$   & $2 \pi \hat{a}_c^2 \hat{b}_c$ & $8 \hat{a}_p^2 \hat{b}_p$ \\
$\hat{L}$ & $\left( \frac{3}{2} \hat{a}_c^2 \hat{b}_c  \right)^{\frac{1}{3}}$ & $\left( \hat{a}_e^2 \hat{b}_e  \right)^{\frac{1}{3}}$ & $\left( \frac{3}{2} \hat{a}_c^2 \hat{b}_c  \right)^{\frac{1}{3}}$ & $\left( \frac{3}{2\pi} \hat{a}_p^2 \hat{b}_p  \right)^{\frac{1}{3}}$\\
$\rchi$ & $\frac{\hat{b}_c}{\hat{a}_c}$ & $\frac{\hat{b}_e}{\hat{a}_e}$ & $\frac{\hat{b}_c}{\hat{a}_c}$ & $\frac{\hat{b}_p}{\hat{a}_p}$ \\
$A_{\bot}$ & ~ $\pi \left( \frac{3 \rchi}{2} \right)^{-\frac{2}{3}}$& $\pi \rchi^{-\frac{2}{3}}$ & $4 \left( \frac{2}{3} \right)^{\frac{2}{3}} \rchi^{\frac{1}{3}}$ & $4 \left( \frac{2\pi}{3} \right)^{\frac{2}{3}} \rchi^{-\frac{2}{3}}$ \\
$\ell_{||}$ & $2 \left( \frac{2}{3} \right)^{\frac{1}{3}} \rchi^{\frac{2}{3}}$ & $2 \rchi^{\frac{2}{3}}$ & $2 \left( \frac{2}{3} \right)^{\frac{1}{3}} \rchi^{-\frac{1}{3}}$ & $2 \left( \frac{2\pi}{3} \right)^{\frac{1}{3}} \rchi^{\frac{2}{3}}$ \\
\hline
\end{tabular}
\caption {Dimensional and dimensionless parameters for the geometries considered. We have kept the definition of the aspect ratio $\rchi$ for the cylinder falling transversally to be that of the same (rotated) object, which modifies the scaling of $A_\bot$ and $\ell_{||}$ in terms of $\rchi$.}\label{table:geometry}
\end{center}
\end{table}

Table \ref{table:CdcComparison} presents a comparison between the critical drag coefficients from the axisymmetric computations {(with adaptive meshing)} and the direct method, calculated over a wide range of aspect ratios $\chi$. Both computations agree reasonably well for intermediate $\chi$ computed, but there are significant errors at both small and large aspect ratios.

The discrepancies are evidently numerical { in origin. The size $n = 200$ denotes the number of points in the longest axial direction of the computational domain. We have also computed the direct method for $n=120,~160$, which establishes that the computation of $C^P_{d,c}$ has converged within mesh resolution.} However, the meshing used for the 3D code is completely uniform and rectangular; we also determine the limiting ranges of the mesh heuristically. Although the mesh domains are large enough to contain the yielded envelopes of the particle, as we know from 2D and axisymmetric computations, yield surfaces have distinct geometric features. These become important in representing the yield limit.  { As our computations are limited in size (i.e.~$n$), as the yielded domain stretches for either small or large $\chi$, the mesh spacing increases and it may be that the regular mesh cannot represent these geometric features} effectively in either asymptotic limit. We continue to explore new discretizations for the 3D code.

\begin{table}
\begin{center}
\begin{tabular}{ l | c | c | c | c}
  \hline
  \hline
  ~ & Cylinder (axial) & Cylinder (axial) & Ellipsoid    & Ellipsoid    \\
    & Viscoplastic     & Direct $n=200$   & Viscoplastic & Direct $n=200$\\
  \hline
  \hline
$\rchi=0.02$ & 12.36  &  9.32 & 12.15 & 8.74    \\
$\rchi=0.14$ & 13.43  & 13.15 & 12.17 & 11.66   \\
$\rchi=0.5$  & 16.39  & 16.15 & 12.78 & 12.92   \\
$\rchi=1$    & 20.01  & 19.73 & 14.19 & 14.51   \\
$\rchi=2$    & 26.10  & 26.09 & 17.61 & 18.25   \\
$\rchi=5$    & 42.02  & 42.22 & 29.06 & 30.03   \\
$\rchi=10$   & 65.76  & 65.77 & 46.02 & 49.03   \\
$\rchi=50$   & 238.50 & 237.50 & 185.00 & 195.24\\
  \hline
\end{tabular}
\caption{Critical drag coefficient $C^P_{d,c}$ for different particles, computed with the methods of \S \ref{sec:axisymmetric_calcs} (Viscoplastic) and of \S \ref{sec:minimization} (Direct).}
\label{table:CdcComparison}
\end{center}
\end{table}

 { \subsection{3D particles without axial symmetry}}

We have also used the direct method to compute some non-axisymmetric particle shapes. Computed estimates of $Y_c$ are given in Table \ref{table:YcDirect}. We use $Y_c$ as opposed to the drag coefficients as the cross-sectional areas are not comparable e.g. for axially and transversely oriented cylinders. We also have computed flows around parallelepipeds at various aspect ratios $\rchi$. Given the above comments, we expect those results at $\rchi \sim O(1)$ to be the most accurate and useful as future benchmarks.

%\begin{table}
%\begin{center}
%\begin{tabular}{ l | c | c | c | c }
%  \hline
%  \hline
%  ~ & Cylinder (trans.) & Cylinder (axial) & Parallelepiped & Ellipsoid \\
%  \hline
%  \hline
%$\rchi=0.14$ & 0.0906 & 0.0381 & 0.0352 & 0.0311 \\
%$\rchi=0.5$  & 0.0916 & 0.0691 & 0.0644 & 0.0651 \\
%$\rchi=1$    & 0.0853 & 0.0883 & 0.0828 & 0.0919 \\
%$\rchi=2$    & 0.0753 & 0.1063 & 0.0999 & 0.1153 \\
%$\rchi=5$    & 0.0602 & 0.1208 & 0.1139 & 0.1285 \\
%$\rchi=10$   & 0.0495 & 0.1230 & 0.1166 & 0.1250 \\
%  \hline
%\end{tabular}
%\caption{$Y_c$ for different particles, computed with the method of Section \ref{sec:minimization} (Direct) and $n=160$.}
%\label{table:YcDirectOld}
%\end{center}
%\end{table}

\begin{table}
\begin{center}
\begin{tabular}{ l | c | c | c | c }
  \hline
  \hline
  ~ & Cylinder (trans.) & Cylinder (axial) & Parallelepiped & Ellipsoid \\
  \hline
  \hline
$\rchi=0.14$ & 0.0895 & 0.0359 & 0.0332 & 0.0308 \\
$\rchi=0.5$  & 0.0911 & 0.0682 & 0.0636 & 0.0650 \\
$\rchi=1$    & 0.0851 & 0.0885 & 0.0827 & 0.0919 \\
$\rchi=2$    & 0.0759 & 0.1063 & 0.0998 & 0.1160 \\
$\rchi=5$    & 0.0606 & 0.1210 & 0.1139 & 0.1298 \\
$\rchi=10$   & 0.0494 & 0.1233 & 0.1161 & 0.1262 \\
  \hline
\end{tabular}
\caption{$Y_c$ for different particles, computed with the method of Section \ref{sec:minimization} (Direct) and $n=200$.}
\label{table:YcDirect}
\end{center}
\end{table}

Examples of some of these computed limiting flows are shown below in Fig.~\ref{fig:3dsolutions}. We show an approximation to the limiting plug shape around the particle (where {$|v+(0,0,1)| \ls 0.1$}). As can be seen the method is able to compute non-convex particle shapes as well.

\begin{figure}
  \centering
  \includegraphics[height = 0.225 \paperheight]{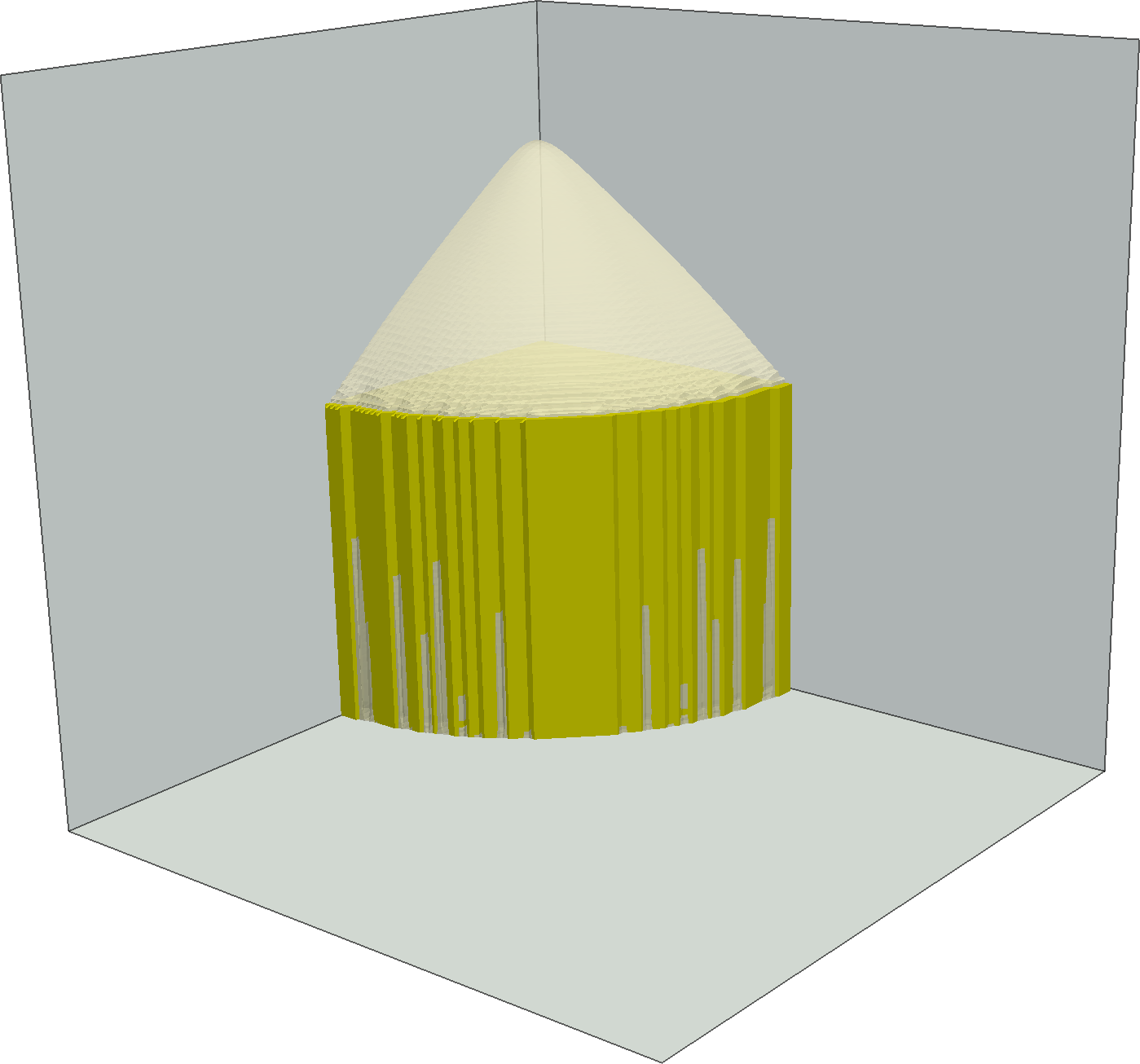}
    \hspace{.2cm}
  \includegraphics[height = 0.225 \paperheight]{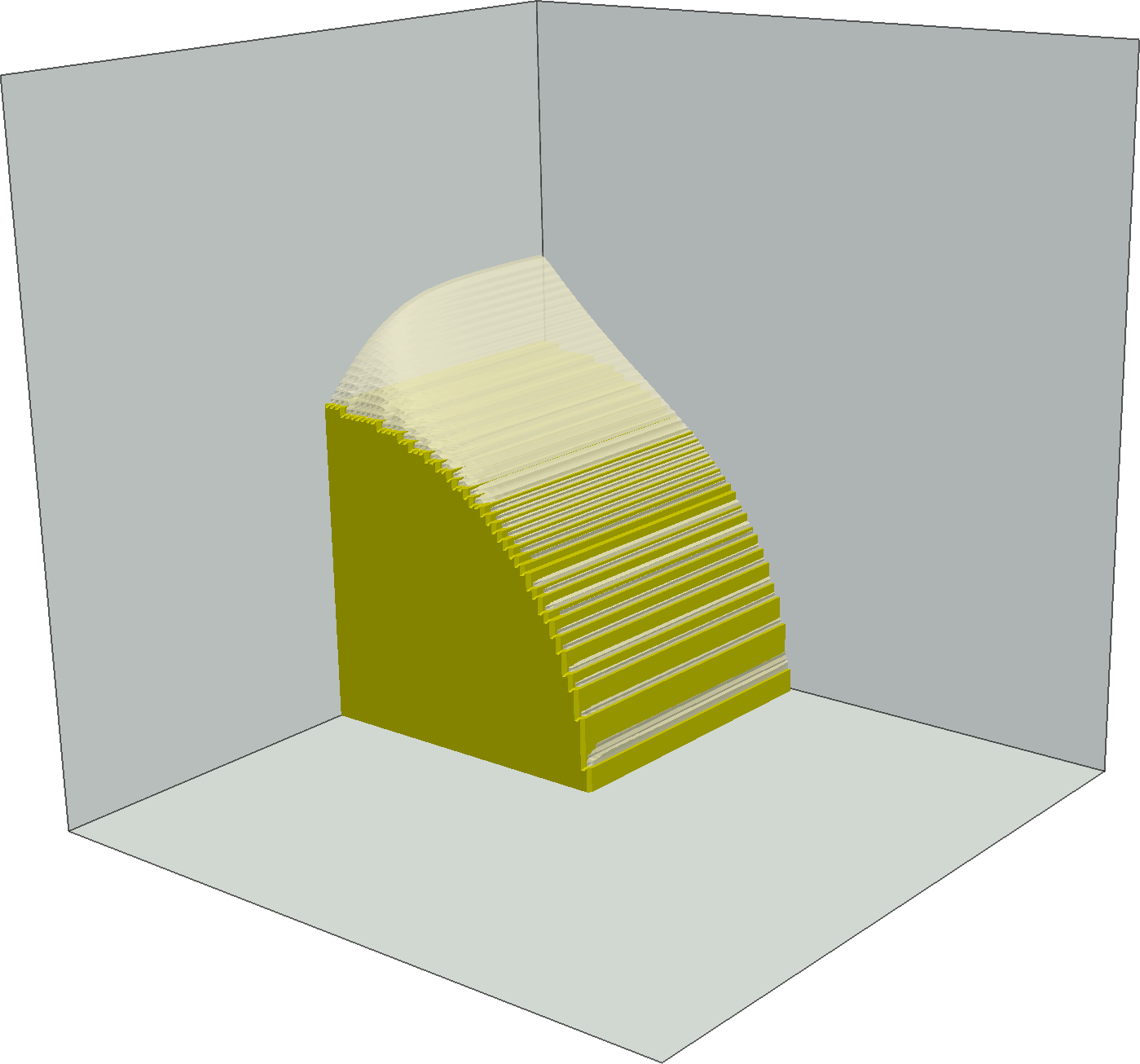}
  \\
  \includegraphics[height = 0.225 \paperheight]{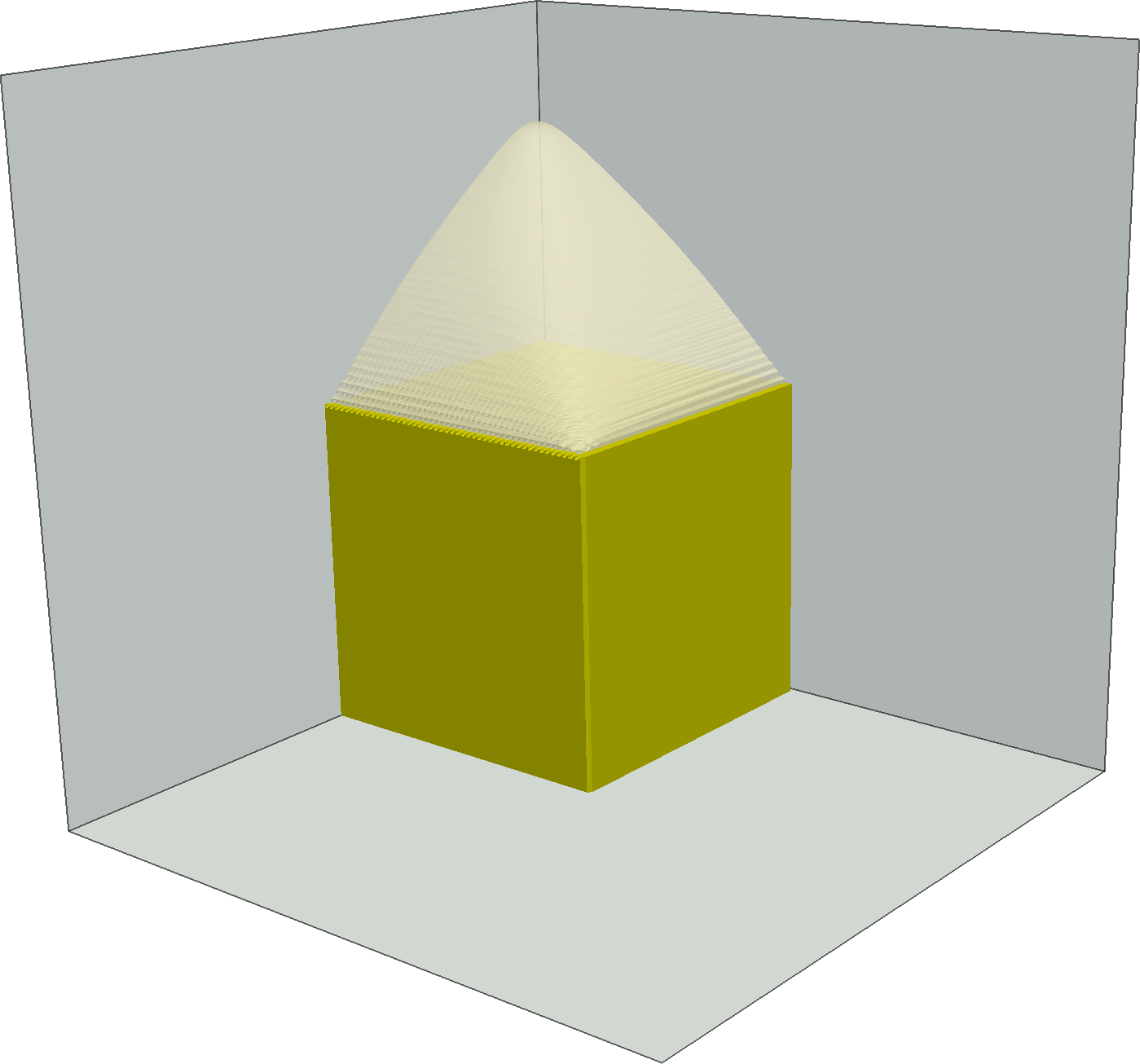}
    \hspace{.2cm}
  \includegraphics[height = 0.225 \paperheight]{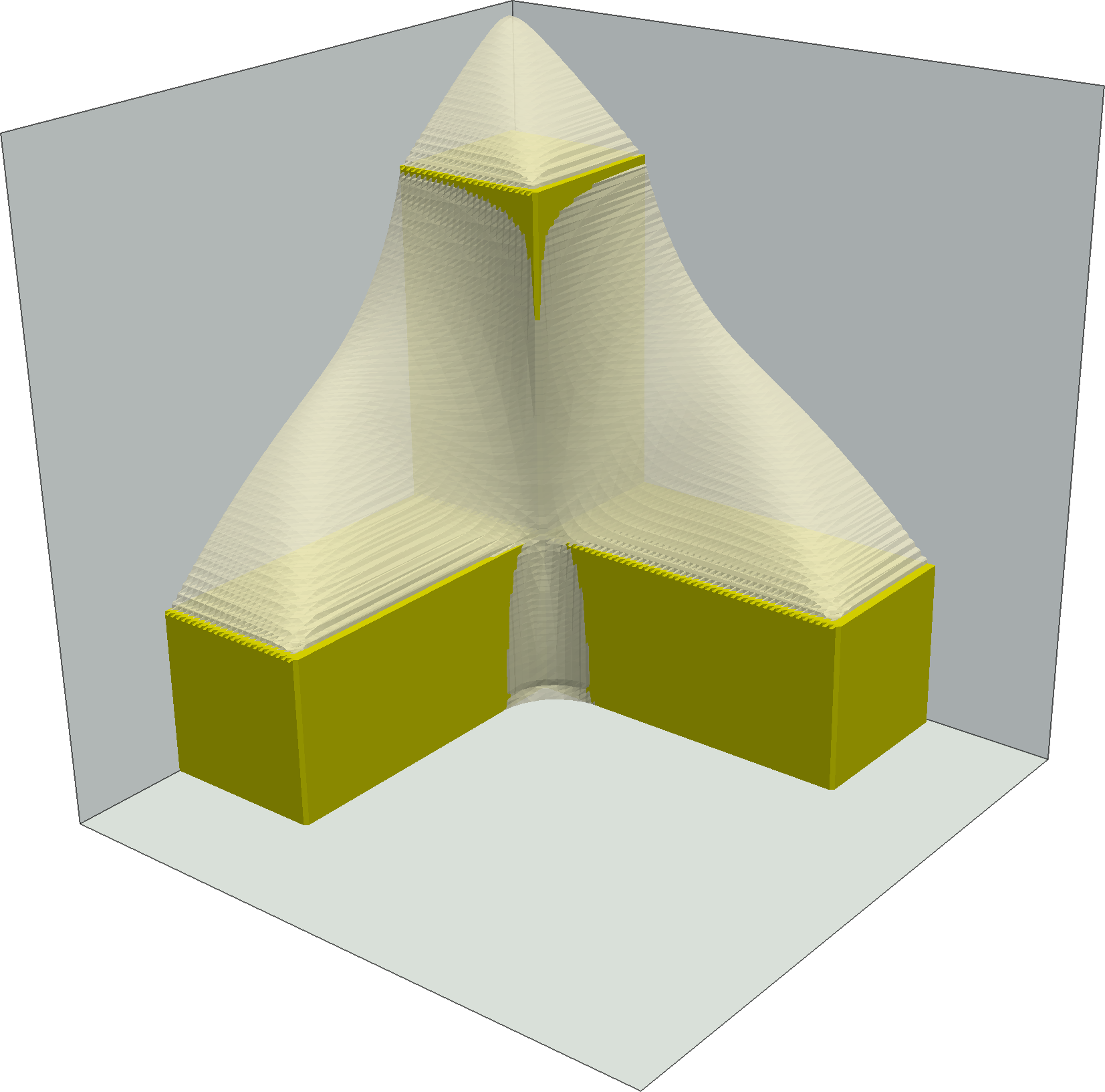}
  \caption{Some computed 3D limiting plug shapes, defined as the surface {$\partial \{ |v+(0,0,1)| \ls 0.1 \}$}. The cylinders and cube correspond to $\rchi=1$, $n=200$. Computations are done in an octant to exploit the symmetry. For the last object, the computed $Y_c$ is $0.0677$, with $n=180$. The values for the others (also with different aspect ratios) can be found in Table \ref{table:YcDirect}.}
  \label{fig:3dsolutions}
\end{figure}

{In 2D flows, different particle shapes can be hidden in the same unyielded envelope; a phenomenon called cloaking \cite{chaparian2017cloaking}. Here we examine possibility of such a phenomenon in 3D by studying a dumbbell-shaped particle. Fig.~\ref{fig:dumbbell} shows the schematic of the two dumbbells considered: two cubic ends with a bridging element of cylindrical (panel a) and parallelepiped shape (panel b). As it is clear from the bottom panels of Fig.~\ref{fig:dumbbell}, both particles share the same unyielded envelope since the bridging part is cloaked, which is intuitive. Hence, the critical plastic drag coefficients are the same: with the circular bridge and $n=200$ we obtain $C_{d,c}^P = 25.5043$ and with parallelepiped bridge $C_{d,c}^P =25.5026$. In comparison, for the complete paralellepiped of aspect ratio $\rchi=2$ we obtain the marginally higher value $C_{d,c}^P =25.6109$, consistent with the slight concavity of the plugs for the dumbbells.}

\begin{figure}
\centering
\includegraphics[width=0.6\textwidth]{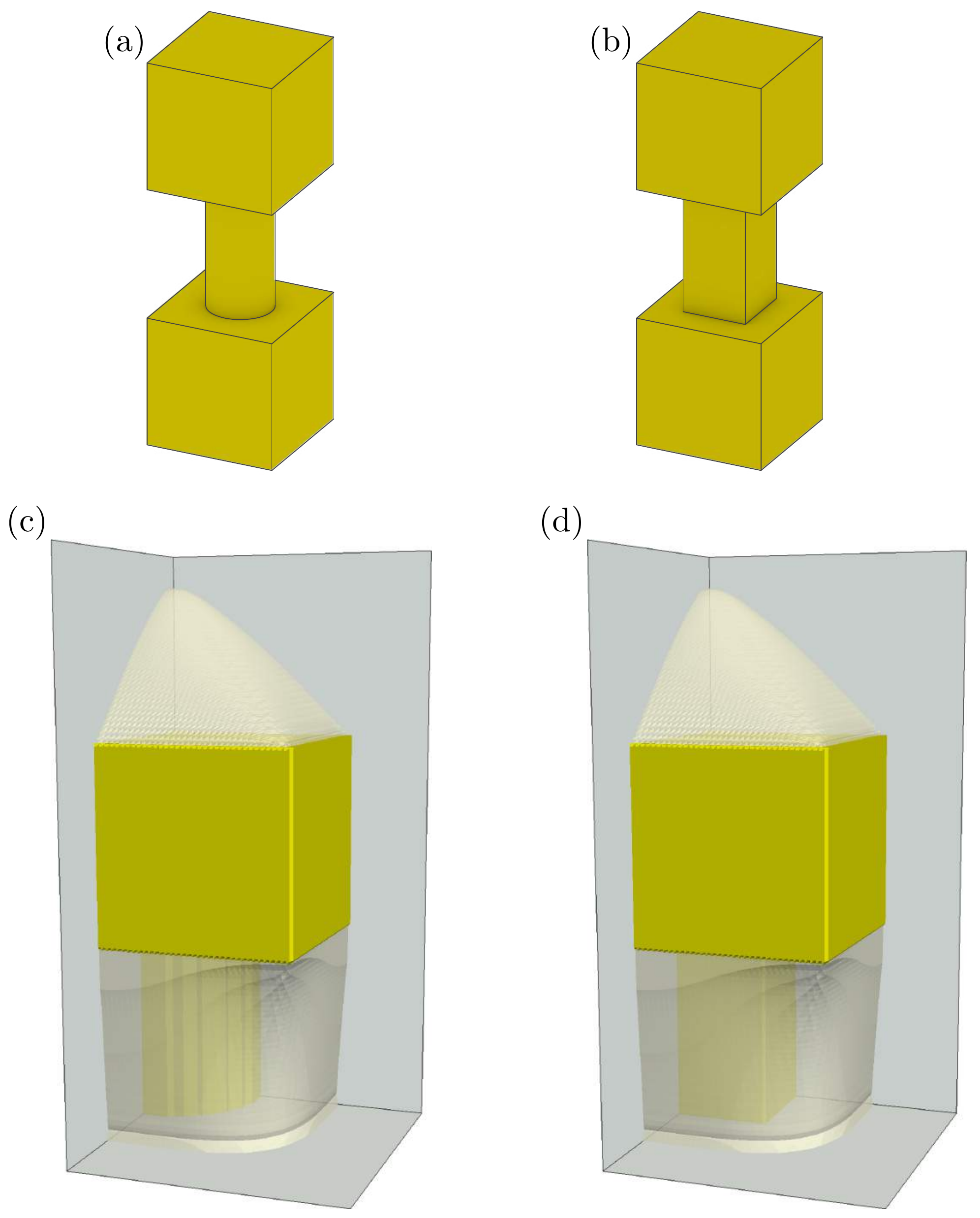}
\caption{{Limiting flows about two dumbbells with cubic ends: (a,b) schematic of the particle shape: one with a cylindrical and the other one with a parallelepiped bridge,~(c,d) computed 3D limiting plug shapes.}}
\label{fig:dumbbell}
\end{figure}

{ Finally, we comment about the unyielded caps sitting on the leading and trailing edges of the particles. As can be predicted by the theory of perfectly-plastic materials \cite{hill1950}, in 2D the unyielded envelope makes $\pi/4$ angle with the vertical symmetry line of the particle shape. This can be concluded from the fact that the shear stress on the vertical symmetry line is zero and the characteristic lines of the hyperbolic governing equations in 2D, which are the maximum shear stress directions as well, make a $\pi/4$ angle with this line. In 3D, the axis of symmetry is again a principal stress direction (i.e.~$\tau_{rz}=0$ at $r=0$), but the extension of the method of characteristics to 3D is not clear since the governing equations are not hyperbolic, due to the presence of hoop stress \cite{hill1950}. However, it is worth mentioning that since the velocity in the radial direction is zero in the unyielded caps attached to the leading and trailing edges of the particles, the hoop stress is zero there and the equations are locally hyperbolic. This could be the main reason that in all the simulations presented in Figs.~\ref{fig:3dsolutions} and \ref{fig:dumbbell}, the unyielded caps make a $\pi/4$ angle with the axial symmetry line.}

\section{Conclusions}
\label{sec:discussion}

Although the yield limit in many classical yield-stress fluid mechanics problem is well-documented as the maximization/minimization problems \cite{frigaard2019background}, practical methods for capturing this important limit are mostly based on the full computations of the Stokes equations. This involves computing the flow in a yield stress fluid and iteratively approaching the yield limit by increasing Bingham/yield numbers in [R]/[M] formulations. To ensure an accurate computation, augmented Lagrangian methods and mesh adaptation should be utilized in these methods, which are relatively slow. Especially when it comes to 3D problems, the cost dramatically increases. Other less popular methods such as the method of characteristics, {\it slipline solutions}, in perfectly-plastic mechanics also may be used to study the yield limit \cite{chaparian2017cloaking,dubash2009final,chaparian2018box}. Nevertheless, this method has some intrinsic drawbacks: finding sliplines is not trivial for complex problems and for some cases the admissible stress and velocity fields associated with the sliplines yield inexact solutions, i.e.~a large uncertainty between lower and upper bounds of the ``load limits''. In the present study, we have presented an alternative method for calculating $Y_c$ by direct usage of the maximization/minimization problem which defines $Y_c$. It can be summarized in two main steps as follows:

\begin{itemize}
\item[(i)] We established that there is a solution in a subspace of BD, the space of possibly discontinuous vector fields of bounded deformation (Thm \ref{thm:relaxation}) and looked at the ways to approach this directly using suitably scaled velocities: solutions for \eqref{eq:quotient} can then be recovered as limits of rescaled physical velocities (Thm \ref{thm:profile}). We explore convex duality for this problem, deriving two dual formulations: one without explicit hydrostatic contributions, easier to analyze in the continuous setting, and another with those contributions, which serves as a starting point for the discretized problem we use.
\item[(ii)] We discretize the saddle point formulation corresponding to the dual problem with hydrostatic contributions through the use of an upwind finite difference scheme particularly adapted to min-max problems with discontinuous solutions. To minimize this discrete formulation we apply a primal-dual hybrid gradient method with proximal regularization in both steps, which makes it suitable for problems with no strict convexity or smoothness.
\end{itemize}

Moreover, we extended our previous studies addressing the yield limit of 2D particles \cite{chaparian2017yield,chaparian2017cloaking} to axisymmetric particles, namely cylinders and ellipsoids. This is feasible via conducting numerical simulations of VP fluid flows about particles in the meridian plane taking into account the hoop stress by introducing an extra component to the Lagrange multiplier ({\it true} stress field upon suitable convergence of the augmented Lagrangian method).

We have validated the 3D direct method for calculating $Y_c$ by comparing with the VP computations. If the particle size in the gravity-plane is comparable with the height of the particle (i.e. intermediate aspect ratios $\rchi$ close to unity), the results are closely matched. In the limit of large/small aspect ratios, however, there is a small discrepancy between the two methods, This may arise because of different meshing strategies in these two distinct methods: anisotropic mesh adaptation is utilized in the axisymmetric VP computations, whereas in the direct method, a regular Cartesian mesh is used for discretization. Indeed, in these limits, the particle shapes become harder to resolve with uniform grids in 3D.
For the present method we have proven (Thm \ref{thm:conv} and Cor \ref{cor:convmin}) that the minimizers of the discrete problems converge to minimizers of \eqref{eq:quotient} as the discretization is refined, but without convergence rates, which are in general not available for such problems without smooth terms. Although an apparent  disadvantage with respect to the methods using the viscoplastic rheology, it is worth noting that convergence rates for most of these methods will also in general degenerate as the yield limit is approached.

With regard to perspectives, we postpone proposing a more suitable/accurate discretization of the primal dual problem \eqref{eq:contprimaldual} to a future study, e.g.~using adaptive finite element schemes. Perhaps more pressing is to extend the range of flows considered. Here we have performed some sample computations with the direct method to calculate $Y_c$ for complex 3D particles, which cannot be computed axisymmetrically. These illustrate one of the benefits of the direct method: no restrictions on the particle shape \emph{per se}. However, on closer inspection we have indeed restricted the limiting problem to linear particle motions. In more generality we would expect also a rotational motion of a non-symmetric particle and these flows need careful examination.

In terms of applicability of the proposed method beyond calculating  particle yield limits, we comment that
the discretization and optimization methods used here can also be applied to problems with different boundary conditions, such as those appearing in landslide predictions \cite{HilIonLacRos02}. In fact, since we can work directly with the yield limit regardless of the viscous rheology appearing in the yielded regions, the same method can be used for limit load analysis in plasticity.

\section*{Acknowledgements}
This work has been supported by the Austrian Science Fund (FWF) within the national research network `Geometry+Simulation', project S11704.
Part of the research has been carried out at the University of British Columbia, supported by Natural Sciences and Engineering Research Council of Canada via their Discovery Grants programme (Grant No. RGPIN-2015-06398).

\bibliography{RefParticle}

\appendix
\section{Proof of Theorem \ref{thm:conv}}\label{sec:conv}
We modify the proofs for the anti-plane case in \cite{IglMerSch18b} to account for the vectorial nature of the problem and the divergence constraint. For the latter, we will need the following discrete lemma:
\begin{lem}\label{lem:discdivproj}
There exists a linear operator $P: S^3 \to S^3$ such that for all $\mathbf v$ we have \[\div_C^n P \mathbf v = 0\text{,  }(P \mathbf v)^{ijk} = (0, 0, 0)\text{ if }(i,j,k) \in G^n \setminus \Omega^n \text{,  }(P \mathbf v)^{ijk} = (0, 0, 1)\text{ if }(i,j,k) \in X^n,\]
and
\begin{equation}\label{eq:projest}\|\mathbf v - P \mathbf v\|_2 \ls C \| \div_C^n P \mathbf v \|_2,\end{equation}
where the constant $C$ depends on $\Omega$ and $X$ but not on $n$.
\end{lem}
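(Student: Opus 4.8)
The plan is to prove Lemma~\ref{lem:discdivproj} as a discrete, mesh-size-uniform counterpart of the classical fact that the divergence admits a bounded right inverse on the subspace of $H^1_0$ vector fields that are additionally pinned to a constant on $X$ (a Bogovskii/de~Rham-type result). Accordingly, I would build $P$ by a discrete Helmholtz-type correction: given $\mathbf v \in S^3$, first overwrite its values on the pinned nodes so as to enforce the boundary data (on the affine subspace of fields already satisfying the pinning conditions this step is the identity, so $P$ is linear there, which is what is used in \eqref{eq:discmin} and in the recovery-sequence part of Theorem~\ref{thm:conv}); then set $g := \div^n_C \mathbf v$, solve a discrete Poisson problem $\Delta^n \phi = g$ for a potential $\phi$ supported away from $\partial(0,1)^3$ and from $X^n$, and define $P\mathbf v := \mathbf v - \nabla^n \phi$ for a discrete gradient $\nabla^n$ paired with $\div^n_C$ so that $\div^n_C \nabla^n = \Delta^n$. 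Then $\div^n_C P\mathbf v = g - \Delta^n\phi = 0$ by construction, and the estimate \eqref{eq:projest} (read with the divergence of the \emph{input}, $\|\div^n_C \mathbf v\|_2$, on the right-hand side) reduces to a bound $\|\nabla^n \phi\|_2 \le C \|g\|_2$ with $C$ independent of $n$.

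For the support bookkeeping I would split the nodes of $G^n$ into the pinned nodes ($G^n \setminus \Omega^n$ together with $X^n \cap G^n$) and the free ones; by \eqref{eq:discompact} the pinned layers are separated from $\partial(0,1)^3$ and from $\partial X$ by a band of free nodes whose width is bounded below \emph{independently of $n$}, being comparable to $\mathrm{dist}(\Omega, \partial(0,1)^3)$ and to $\mathrm{dist}(X^n, \partial X)$. One solves $\Delta^n \phi = g$ with $\phi = 0$ on the pinned nodes and on one further buffer layer. Since $g$ is a finite-difference combination of a field that is constant (or zero) on the pinned region, $g$ vanishes there, so $\phi$ may be taken to vanish there too; consequently $\nabla^n \phi = 0$ on $G^n \setminus \Omega^n$ and on $X^n$, provided the stencils of $\nabla^n$ only reach into the buffer — which is exactly why the ``less restrictive'' definitions \eqref{eq:discretesolid}--\eqref{eq:discretedomain} were arranged. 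Hence $P\mathbf v$ inherits the pinning boundary conditions.

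The uniform estimate then follows from discrete calculus: using that $-\div^n_C$ is the discrete adjoint of $\nabla^n$ with no boundary contribution (all fields vanish near $\partial(0,1)^3$), the energy identity $\|\nabla^n \phi\|_2^2 = \langle \Delta^n \phi, \phi\rangle = -\langle g, \phi\rangle \le \|g\|_2 \|\phi\|_2$, combined with a discrete Poincar\'e inequality $\|\phi\|_2 \le C_P \|\nabla^n \phi\|_2$ — valid with $C_P$ independent of $n$ because $\phi$ is supported in $\Omega^n \Subset (0,1)^3$ — gives $\|\nabla^n\phi\|_2 \le C_P \|g\|_2$, which is \eqref{eq:projest} with $C = C_P$ depending only on $\Omega$ and $X$.

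The step I expect to be the main obstacle is that the centered-difference divergence $\div^n_C$ has a nontrivial kernel: if it is paired with a centered-difference gradient, the composed operator $\div^n_C\nabla^n_C$ decouples $G^n$ into its $2^3$ parity sublattices (the familiar checkerboard degeneracy), so it is not invertible on all of $S$ and the naive Poincar\'e step fails. Two remedies are available. First, one observes that the divergences actually needing correction have zero discrete mean on \emph{each} parity sublattice — this follows by telescoping, since the relevant fields vanish near $\partial(0,1)^3$ by \eqref{eq:discompact} and the structure of $K^n$ — so the Poisson problem splits into eight genuine sub-grid problems with spacing $2/n$, each well posed and each handled by the energy-plus-Poincar\'e argument above with uniformly controlled constants. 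Alternatively, one replaces $\nabla^n_C$ by the forward-difference gradient whose negative backward-difference adjoint composes to the standard seven-point Laplacian (which has no spurious kernel), at the cost of re-checking that its one-sided stencils still respect the buffer layers. Either route is discrete-calculus bookkeeping parallel to \cite{IglMerSch18b}; the genuinely new ingredient over the scalar anti-plane case is precisely this treatment of the divergence constraint and of the parity decoupling.
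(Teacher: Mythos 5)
Your overall route is in substance the one the paper takes: the paper defines $P$ through the orthogonal decomposition of $S^3$ induced by the summation-by-parts identity pairing $\div^n_C$ with the centered gradient of potentials vanishing on the pinned nodes (citing Chorin and Kuroki--Soga for the details), which is exactly your discrete Helmholtz correction computed variationally, and the uniform bound \eqref{eq:projest} comes from the energy identity plus an $n$-independent discrete Poincar\'e inequality, as you propose. You also correctly read \eqref{eq:projest} with $\|\div^n_C \mathbf v\|_2$ (the input) on the right-hand side, which is how the lemma is actually used in the proof of Theorem \ref{thm:conv}. Your observation that the centered-difference operators decouple into parity sublattices, so that the Poincar\'e step must be run on each sublattice separately, is the genuinely vectorial/centered-difference point and is consistent with what the cited references handle.

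Two steps of your construction, however, would not survive as written. First, the inference ``$g$ vanishes on the pinned region, so $\phi$ may be taken to vanish there too'' is not valid: a solution of a discrete Poisson problem need not vanish where its source does (harmonic extensions), and $g=\div^n_C\mathbf v$ does not in fact vanish at pinned nodes adjacent to free ones. If you instead \emph{impose} $\phi=0$ on the pinned nodes plus an extra buffer layer, you can no longer enforce $\Delta^n\phi=g$ at the buffer nodes, so $\div^n_C P\mathbf v$ fails to vanish exactly on that layer; exact preservation of the pinned values and exact discrete divergence-freeness collide precisely at the interface. The clean resolution is the paper's variational one: minimize $\|\mathbf v-\nabla^n_C u\|_2$ over potentials $u$ vanishing on the pinned set; the normal equations together with the summation-by-parts identity give $\div^n_C(\mathbf v-\nabla^n_C u)=0$ at every node where $u$ is free, the estimate follows from Cauchy--Schwarz and the uniform Poincar\'e inequality, and it is the safety margins built into \eqref{eq:discretesolid}--\eqref{eq:discretedomain} that keep the pinned values intact, rather than any vanishing of $\phi$ deduced from $g$. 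Second, your alternative remedy for the checkerboard kernel --- switching to a forward-difference gradient so as to invert the seven-point Laplacian --- is not available here: the constraint is $\div^n_C P\mathbf v=0$ with the \emph{centered} divergence, so the correcting field must be a gradient adjoint to $\div^n_C$, i.e.\ the centered one; $\div^n_C\nabla^n_F\phi$ is not the seven-point Laplacian and the corrected field would not be $\div^n_C$-free. Only your first remedy (parity splitting) is compatible, and there the zero-mean compatibility check you invoke is unnecessary once homogeneous Dirichlet values on the pinned nodes are imposed on each sublattice, since each sublattice problem is then coercive.
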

\begin{proof}
It follows essentially by decomposing $S^3$ into two orthogonal complementary subspaces. The notion of orthogonality used arises from the discrete integration by parts formula for arbitrary $\mathbf v \in S^3$ and $u \in S$ with zero values on $\left( G^n \setminus \Omega^n \right) \cap X^n$:
\[\sum_{(i,j,k)\in \Omega^n \setminus X^n} \nabla^n_C u \cdot \mathbf v = - \sum_{(i,j,k)\in \Omega^n \setminus X^n} u \div_C^n \mathbf v.\]
For detailed proofs, see \cite[Thm.~2]{Cho69} for the case of centered differences and periodic boundary conditions and \cite[Thms.~2.2, 2.3]{KurSog18} for Dirichlet boundary conditions and forward differences.
\end{proof}
We will also need the following continuous lemma:
\begin{lem}\label{lem:tdpos}
Let $\mathbf{v} \in \BD(\R^d)$ and $\Omega \subset \R^d$ open. Then $\TD(\mathbf{v},\Omega) = \TD^+(\mathbf{v},\Omega)$ defined by
\begin{equation}\TD^+(\mathbf{v},\Omega):=\sup \left\{\int \mathbf{v} \cdot \div \mathbf{q}_1 - \mathbf{v} \cdot \div \mathbf{q}_2 \ \middle \vert \  \mathbf{q}_1,\mathbf{q}_2 \in \mathcal C_0^1(\Omega, \R^{d \times d}_{\mathrm{sym}}),\, |\mathbf{q}_1|_F^ 2+|\mathbf{q}_2|_F^2\ls 1,\, \mathbf{q}_1,\mathbf{q}_2 \gs 0 \right \}. \label{eq:TDpos} \end{equation}
where $\mathbf{q}_1,\mathbf{q}_2 \gs 0$ indicates that each component of $\mathbf{q}_1$ and $\mathbf{q}_2$ is nonnegative.
\end{lem}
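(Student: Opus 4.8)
The plan is to prove the two inequalities $\TD^+(\mathbf v,\Omega)\ls\TD(\mathbf v,\Omega)$ and $\TD(\mathbf v,\Omega)\ls\TD^+(\mathbf v,\Omega)$ separately. For the first, given an admissible pair $\mathbf q_1,\mathbf q_2$ as in \eqref{eq:TDpos} I set $\mathbf q:=\mathbf q_1-\mathbf q_2\in\mathcal C_0^1(\Omega,\R^{d\times d}_{\mathrm{sym}})$. Since every entry of $\mathbf q_1$ and of $\mathbf q_2$ is nonnegative, one has $(q_{1,ij}-q_{2,ij})^2\ls q_{1,ij}^2+q_{2,ij}^2$ entrywise, hence $|\mathbf q|_F^2\ls|\mathbf q_1|_F^2+|\mathbf q_2|_F^2\ls1$, so $\mathbf q$ is an admissible test field for $\TD(\mathbf v,\Omega)$. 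By linearity of the divergence, $\int\mathbf v\cdot\div\mathbf q_1-\mathbf v\cdot\div\mathbf q_2=\int\mathbf v\cdot\div\mathbf q\ls\TD(\mathbf v,\Omega)$, and taking the supremum over $\mathbf q_1,\mathbf q_2$ yields $\TD^+(\mathbf v,\Omega)\ls\TD(\mathbf v,\Omega)$.

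For the reverse inequality I would start from an arbitrary test field $\mathbf q\in\mathcal C_0^1(\Omega,\R^{d\times d}_{\mathrm{sym}})$ with $|\mathbf q|_F\ls1$ and split it entrywise into positive and negative parts, $\mathbf q=\mathbf q^+-\mathbf q^-$. These are Lipschitz, compactly supported in $\Omega$, symmetric-matrix valued, entrywise nonnegative, and have entrywise disjoint supports, so $|\mathbf q^+|_F^2+|\mathbf q^-|_F^2=|\mathbf q|_F^2\ls1$ pointwise. The sole obstruction to using $(\mathbf q^+,\mathbf q^-)$ directly in \eqref{eq:TDpos} is that $\mathbf q^{\pm}$ fail to be $\mathcal C^1$ — this is the main (and essentially the only) difficulty. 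I would remove it by mollifying: with a standard nonnegative mollifier $\rho_\eta$, set $\mathbf q_1^\eta:=\rho_\eta*\mathbf q^+$ and $\mathbf q_2^\eta:=\rho_\eta*\mathbf q^-$. For $\eta$ small these lie in $\mathcal C_0^\infty(\Omega,\R^{d\times d}_{\mathrm{sym}})$, are entrywise nonnegative, and satisfy the norm constraint by Jensen's inequality for the convex map $A\mapsto|A|_F^2$ together with $\int\rho_\eta=1$ and the disjoint-support identity above:
\begin{align*}
|\mathbf q_1^\eta(x)|_F^2+|\mathbf q_2^\eta(x)|_F^2 &\ls\int\rho_\eta(y)\big(|\mathbf q^+(x-y)|_F^2+|\mathbf q^-(x-y)|_F^2\big)\dd y\\
&=\int\rho_\eta(y)\,|\mathbf q(x-y)|_F^2\dd y\ls1.
\end{align*}
Hence $(\mathbf q_1^\eta,\mathbf q_2^\eta)$ is admissible for $\TD^+(\mathbf v,\Omega)$.

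It then remains to pass to the limit $\eta\to0$. Since $\mathbf q_1^\eta-\mathbf q_2^\eta=\rho_\eta*\mathbf q$, we have $\div(\mathbf q_1^\eta-\mathbf q_2^\eta)=\rho_\eta*\div\mathbf q$, which converges to $\div\mathbf q$ uniformly with all supports contained in a fixed compact subset of $\Omega$; since $\mathbf v\in L^1$, this gives $\int\mathbf v\cdot\div\mathbf q_1^\eta-\mathbf v\cdot\div\mathbf q_2^\eta\to\int\mathbf v\cdot\div\mathbf q$. Therefore $\int\mathbf v\cdot\div\mathbf q\ls\TD^+(\mathbf v,\Omega)$ for every admissible $\mathbf q$, and taking the supremum over $\mathbf q$ yields $\TD(\mathbf v,\Omega)\ls\TD^+(\mathbf v,\Omega)$, which combined with the first inequality completes the proof. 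The only genuinely delicate point is ensuring that mollification preserves the quadratic constraint $|\mathbf q_1|_F^2+|\mathbf q_2|_F^2\ls1$, which is exactly where the entrywise disjointness of the supports of $\mathbf q^+$ and $\mathbf q^-$ is used.
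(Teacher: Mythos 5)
Your proof is correct, and it takes a genuinely different (and simpler) route for the nontrivial inequality $\TD \ls \TD^+$ than the paper does. The easy inequality is handled identically in both: $\mathbf q_1-\mathbf q_2$ is admissible for $\TD$ because componentwise nonnegativity kills the cross term. For the converse, the paper starts from a near-maximizing test field $\mathbf q_\eps$, truncates it near its zero level sets with a smooth function $\psi_\alpha$ so that the entrywise positive and negative parts of the modified field are themselves smooth, and then controls the resulting error through $L^2(|\gsym(\mathbf v)|_F)$ estimates involving the Radon--Nikodym derivative $\d\gsym(\mathbf v)/\d|\gsym(\mathbf v)|_F$, picking up error terms of the form $C\eps\bigl(1+|\gsym(\mathbf v)|_F(\Omega)\bigr)^2$ that vanish as $\eps \to 0$. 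You instead fix an \emph{arbitrary} admissible $\mathbf q$, split it entrywise into Lipschitz positive and negative parts, and mollify each part separately; the key observation that the entrywise disjointness of supports gives $|\mathbf q^+|_F^2+|\mathbf q^-|_F^2=|\mathbf q|_F^2$ pointwise, combined with Jensen's inequality for $A\mapsto|A|_F^2$, shows the mollified pair still satisfies the quadratic constraint, and uniform convergence of $\rho_\eta * \div\mathbf q$ against $\mathbf v\in L^1$ lets you recover $\int \mathbf v\cdot\div\mathbf q$ exactly in the limit. This avoids the measure-theoretic estimates entirely, introduces no error terms depending on $\mathbf v$, and works test field by test field rather than only at a near-maximizer; the paper's construction yields, as a by-product, a smooth nonnegative decomposition that is quantitatively close to the polar of $\gsym(\mathbf v)$, but for the statement of the lemma itself your argument is complete and more economical.
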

\begin{proof}
We recall that
$$\TD(\mathbf{v},\Omega) = \sup\left \{ \int \mathbf{v} \cdot \div \mathbf{q} \ \middle \vert \ \mathbf{q} \in \mathcal C_0^1(\Omega,\R^{d \times d}_{\mathrm{sym}}),\, |\mathbf{q}| \ls 1 \right \}.$$
Let $\mathbf{q}_1,\mathbf{q}_2$ be admissible in the right hand side of \eqref{eq:TDpos}. Then we notice that $\mathbf{q}_1-\mathbf{q}_2$ is admissible in the above, because $\mathbf{q}_1,\mathbf{q}_2$ being componentwise {nonnegative} implies
$$|\mathbf{q}_1-\mathbf{q}_2|_F^2=|\mathbf{q}_1|_F^2+|\mathbf{q}_2|_F^2-2\,( \mathbf{q}_1:\mathbf{q}_2)\ls 1-2\, (\mathbf{q}_1:\mathbf{q}_2) \ls 1,$$
and since $\div (\mathbf{q}_1-\mathbf{q}_2)=\div \mathbf{q}_1 - \div \mathbf{q}_2$ we have
$$\TD^+(\mathbf{v},\Omega) \ls \TD(\mathbf{v},\Omega).$$

To prove the reverse inequality, let $\eps > 0$ be arbitrary and $\mathbf{q}_\eps \in \mathcal C_0^1(\Omega,\R^{d \times d})$ such that
$$\TD(\mathbf{v},\Omega) - \int_{\Omega} \sum_{j=1}^d v_j \div (\mathbf{q}_\eps)_j < \eps,$$
which we can write (renaming $\mathbf{q}_\eps$ to its additive inverse, for convenience) as
\begin{equation}\label{eq:prodineq}\int_{\Omega} \left(1 - \mathbf{q}_\eps : \frac{\d \dot{\bs \gamma}(\mathbf{v})}{\d| \dot{\bs \gamma}(\mathbf{v})|_{F}} \right) \d| \dot{\bs \gamma}(\mathbf{v})|_{F} < \eps,\end{equation}
where $\frac{\d  \dot{\bs \gamma}(\mathbf{v})}{\d| \dot{\bs \gamma}(\mathbf{v})|_F}$ is the Radon-Nikodym derivative of the tensor-valued measure $\dot{\bs \gamma}(\mathbf{v})$ with respect to the unsigned scalar measure $| \dot{\bs \gamma}(\mathbf{v})|_F$. Noting that $|\mathbf{q}_\eps|_F \ls 1$ and $\frac{\d  \dot{\bs \gamma}(\mathbf{v})}{\d| \dot{\bs \gamma}(\mathbf{v})|_F} \ls 1$, the last inequality implies (since for $|\bs{\mu}|_F,|\bs{\nu}|_F \ls 1$, $|\bs{\mu} - \bs{\nu}|_F^2 \ls 2 - 2 \bs{\mu} : \bs{\nu}$) {that}
\begin{equation}\label{eq:distineq}\int_{\Omega} \frac 12 \left \vert  \mathbf{q}_\eps- \frac{\d \dot{ \bs \gamma}(\mathbf{v})}{\d|\dot{ \bs \gamma}(\mathbf{v})|_F}  \right \vert^2 \dd|\dot{ \bs \gamma}(\mathbf{v})|_F < \eps.\end{equation}
Notice also that by \eqref{eq:prodineq} and the Cauchy-Schwarz inequality we have
\begin{equation}\label{eq:closetoone}
\begin{aligned}\int_{\Omega}1-|\mathbf{q}_\eps|_F^2 \dd|\dot{ \bs \gamma}(\mathbf{v})|_F&=\int_{\Omega}\big(1-|\mathbf{q}_\eps|_F\big)\big(1+|\mathbf{q}_\eps|_F\big)\dd|\dot{ \bs \gamma}(\mathbf{v})|_F \leq 2\int_{\Omega}1-|\mathbf{q}_\eps|_F \dd|\dot{ \bs \gamma}(\mathbf{v})|_F\\
&\leq 2 \int_{\Omega} \left( 1 - \mathbf{q}_\eps : \frac{\d \dot{ \bs \gamma}(\mathbf{v})}{\d|\dot{ \bs \gamma}(\mathbf{v})|_F} \right) \d|\dot{ \bs \gamma}(\mathbf{v})|_F <2\eps,
\end{aligned}
\end{equation}

Now we replace the components $(q_\eps)_{ij}$ by $(\tilde q_\eps)_{ij}$ which are smooth, coincide with $(q_\eps)_{ij}$ out of $\{|(q_\eps)_{ij}|_F < \sqrt{\eps} \}$, that satisfy
$$|\tilde{\mathbf{q}}_\eps|_F \ls |\mathbf{q}_\eps|_F,$$
and such that $\{(\tilde q_\eps)_{ij} = 0\}$ is the closure of an open set.
One can for example choose
$$0 < \alpha < \min\left(\frac{1}{\sqrt{d}}, \sqrt{\eps}\right)$$
and define a smooth nondecreasing function $\psi_\alpha : \R \to \R$ such that $\psi_\alpha(t) = t$ for $|t| \gs \alpha$, $|\psi_\alpha(t)| \leq |t|$ and $\psi_\alpha (-\alpha/2, \alpha/2)=\{0\}$
to define
$$(\tilde q_\eps)_{ij} := \psi_\alpha \circ (q_\eps)_{ij}.$$

Thus we have $|\tilde{\mathbf{q}}_\eps|_F \ls 1$ and $|(\tilde q_\eps)_{ij} - (p_\eps)_{ij} | \ls 2\sqrt{\eps}$, and taking into account \eqref{eq:prodineq} we obtain
\begin{equation}\label{eq:tildedistineq}\begin{aligned}\Biggl( \int_{\Omega} \left \vert  \tilde{\mathbf{q}}_\eps- \frac{\d \dot{ \bs \gamma}(\mathbf{v})}{\d|\dot{ \bs \gamma}(\mathbf{v})|_F}  \right \vert_F^2 \dd|\dot{ \bs \gamma}(\mathbf{v})|_F \Biggr)^{\frac 12} &\ls \left( \int_{\Omega} \left \vert \mathbf{q}_\eps - \frac{\d \dot{ \bs \gamma}(\mathbf{v})}{\d|\dot{ \bs \gamma}(\mathbf{v})|_F}  \right \vert_F^2 \dd|\dot{ \bs \gamma}(\mathbf{v})|_F \right)^{\frac 12} + \left( \int_{\Omega} |\tilde{\mathbf{q}}_\eps - \mathbf{q}_\eps |_F^2 \dd|\dot{ \bs \gamma}(\mathbf{v})|_F \right)^{\frac 12} \\
&\ls C \sqrt{\eps} \left( 1 + |\dot{ \bs \gamma}(\mathbf{v})|_F(\Omega)\,\right).
\end{aligned}\end{equation}

Furthermore, using \eqref{eq:closetoone} and the definition of $\tilde{\mathbf{q}}_\eps$ we obtain the estimate
\begin{equation*}\label{eq:tildeclosetoone}
\int_{\Omega}1-|\tilde{\mathbf{q}}_\eps|_F^2 \dd|\dot{ \bs \gamma}(\mathbf{v})|_F = \int_{\Omega}1-|\mathbf{q}_\eps|_F^2 \dd|\dot{ \bs \gamma}(\mathbf{v})|_F + \int_{\Omega} | \mathbf{q}_\eps |_F^2 - |\tilde{\mathbf{q}}_\eps |_F^2 \dd|\dot{ \bs \gamma}(\mathbf{v})|_F \leq 2\eps + 4d\eps|\dot{ \bs \gamma}(\mathbf{v})|_F < C\eps(1+|\dot{ \bs \gamma}(\mathbf{v})|_F),
\end{equation*}
which ensures, writing $1-\mu:\nu = \frac 12 (1-|\mu|_F^2 + 1- |\nu|_F^2 + |\mu - \nu|_F^2)$ and by \eqref{eq:tildedistineq} that
$$\left \vert \int_{\Omega} \left(\tilde{\mathbf{q}}_\eps : \frac{\d \dot{ \bs \gamma}(\mathbf{v})}{\d|\dot{ \bs \gamma}(\mathbf{v})|_{F}} -1 \right) \dd|\dot{ \bs \gamma}(\mathbf{v})|_F \right \vert \ls C \eps\left( 1 + \big(\ 1 + |\dot{ \bs \gamma}(\mathbf{v})|_F(\Omega)\, \big)^2\right).$$
Now, we notice that having fattened the level-set $\{(\tilde{\mathbf{q}}_\eps)_{ij}=0\}$, we can write
$$ (\tilde{\mathbf{q}}_\eps)_{ij} = \left[(\tilde{\mathbf{q}}_\eps)_{ij}\right]^+ - \left[(\tilde{\mathbf{q}}_\eps)_{ij}\right]^-$$ where both quantities are smooth.
Writing similarly
$$\tilde{\mathbf{q}}_\eps = \tilde{\mathbf{q}}_\eps^+ - \tilde{\mathbf{q}}_\eps^-$$
with $\tilde{\mathbf{q}}_\eps^\pm$ are smooth and have only positive components, we note that $(\tilde{\mathbf{q}}_\eps^+,\tilde{\mathbf{q}}_\eps^-)$ are admissible in the right hand side of \eqref{eq:TDpos}, showing that
$$\TD^+(v,\Omega) \gs \TD(v,\Omega) -C(v)\eps.$$
Making $\eps \to 0$, we conclude.
\end{proof}
We are now ready to prove the convergence itself.
\begin{proof}[Proof of Theorem \ref{thm:conv}]
First, we study the $\Gamma$-liminf and assume that $\mathbf{v}^n \to \mathbf{v}$ in $L^1$. Notice that we can write $\TD^n(\mathbf{v}^n)$ as a dual formulation
$$\TD^n(\mathbf{v}^n) = \sup\left\{ \sum_{(i,j,k) \in G^n} \left(\mathbf{v}^n\right)^{ijk} \cdot \left(\div^n \mathbf{q}\right)^{ijk} \, \mid \, \mathbf{q}:G^n \to (\R^{3 \times 3}_{\mathrm{sym}})^2, \mathbf{q} = 0 \text{ on } X^n \text{ and } G^n \setminus \Omega^n \right\}$$
where $\div^n \mathbf{q}: G^n \to \R^3$ is defined by
\begin{align*}((\div^n \mathbf{q} )_{\alpha})^{ijk} & := (\mathbf{q}_{1,\alpha,+})^{i,j,k} - (\mathbf{q}_{1,\alpha,+})^{i-1,j,k} + (\mathbf{q}_{1,\alpha,-})^{i,j,k} - (\mathbf{q}_{1,\alpha,-})^{i+1,j,k} + (\mathbf{q}_{2,\alpha,+})^{i,j,k} - (\mathbf{q}_{2,\alpha,+})^{i,j-1,k} \\
& + (\mathbf{q}_{2,\alpha,-})^{i,j,k} - (\mathbf{q}_{2,\alpha,-})^{i,j+1,k} + (\mathbf{q}_{3,\alpha,+})^{i,j,k} - (\mathbf{q}_{3,\alpha,+})^{i,j,k-1} + (\mathbf{q}_{3,\alpha,-})^{i,j,k} - (\mathbf{q}_{3,\alpha,-})^{i,j,k+1},\end{align*}
where zero values are used whenever $i-1 \ls 0$ or $i+1 \gs n$, and similarly for $j, k$. This supremum is obtained easily by a (discrete) integration by parts using the expression
$$|\dot{ \bs \gamma}(\mathbf{v}) \vee 0|_F = \sup_{\substack{|\mathbf{q}|_F \ls 1 \\ q_{ijk} \gs 0}} \dot{ \bs \gamma}(\mathbf{v}):\mathbf{q},$$
and the zero boundary conditions for $\mathbf q$.

Now, we note that every such $\mathbf{q} : G^n \to (\R^{3 \times 3})^2$ can be seen as the discretization of some compactly supported smooth function $\overline{\mathbf{q}}:(0,1)^3 \to (\R^{3 \times 3})^2$, for example stating
$$ \mathbf{q}^{ijk} = \fint_{R_{ijk}} \overline{\mathbf{q}}.$$
As a result, one can write
$$\TD^n(v^n) = \sup\left\{ \mathbf{v}^n \cdot \div^n \overline{\mathbf{q}} \, \middle\vert \, \overline{\mathbf{q}}\in \mathcal C_0^1\left([0,1]^3, (\R^{3 \times 3}_\mathrm{sym})^2\right), \ | \overline{\mathbf{q}}|_F \ls 1, \overline{\mathbf{q}} \gs 0 \ \right\}.$$
Since it just consists of forward and backward finite differences, for a smooth function $\overline{\mathbf{q}}$ the quantity $\div^n \overline{\mathbf{q}}$ converges to
\begin{align*}(\div \overline{\mathbf{q}})_\alpha & = \div(\overline{\mathbf{q}}_{1,\alpha,1},\overline{\mathbf{q}}_{2,\alpha,1}, \overline{\mathbf{q}}_{3,\alpha,1})+ \div(-\overline{\mathbf{q}}_{1,\alpha,2},-\overline{\mathbf{q}}_{2,\alpha,2}, -\overline{\mathbf{q}}_{3,\alpha,2}).
\end{align*}
Therefore, using Lemma \ref{lem:tdpos}
$$\liminf \TD^n(\mathbf{v}^n) \gs \TD^+(\mathbf{v}) = \TD(\mathbf{v}).$$

For $\rchi_{\{\div^n_C = 0\}}$, we have a similar dual characterization: $\div^n_C \mathbf v^n = 0$ if and only if
\begin{equation}\label{eq:discweakincomp}\begin{gathered}\sum_{(i,j,k) \in G^n} (\psi)^{ijk} \left(\div^n_C \mathbf v^n\right)^{ijk} = - \sum_{(i,j,k) \in G^n} \left( \nabla^n_C \psi \right)^{ijk} \cdot (\mathbf v^n)^{ijk} = 0 \\ \text{for all }\psi: G^n \to \R \text{ with }\psi = 0 \text{ on }G^n \setminus \Omega^n \text{ and } X^n.\end{gathered}\end{equation}
Once again, such functions $\psi$ can be seen as discretizations of $\overline \psi \in \mathcal{C}^1_0\big((0,1)^3\big)$, so that if for infinitely many $n$ the condition \eqref{eq:discweakincomp} holds (i.e., $\lim \inf \rchi_{\{\div^n_C\}} ( \mathbf v^n) = 0$), we must have $\div \mathbf v = 0$, since $\mathbf v^n \to \mathbf v$ and $\nabla^n_C \psi \to \nabla \overline \psi$.

Finally, for $\rchi_{C^n}$, let us notice that if $\rchi_{C}(\mathbf{v}) = + \infty$, that is either $\mathbf{v} \not \equiv 0$ on $[0,1]^3 \setminus \Omega$ or $\mathbf{v} \not \equiv 1$ on $X$. If the latter holds, then for $\eps$ small enough, $X \cap \left(\{v_3 > 1+2\eps\} \cup \{v_3 < 1- 2 \eps \} \right)$ has positive measure and thanks to the $L^1$ convergence of $v^n$,
$$\Omega^n_s \cap \left( \{v_3^n >1+ \eps \} \cup \{v_3^n < 1- \eps \} \right)$$
must have a positive measure for $n$ big enough. That implies $\rchi_{C^n}(\mathbf{v}^n) = + \infty$ and the {lim inf} inequality is true. If $\rchi_{C}(\mathbf{v}) < \infty$, then $\rchi_{C}(\mathbf{v}) =0$ and the inequality is also true since $C^n \subset C$.

Let now $\mathbf{v} \in \BD ((0,1)^3)$.  We want to construct a recovery sequence $\mathbf{v}^n \to \mathbf{v}$ such that $$\TD(\mathbf{v})  + \rchi_C(\mathbf{v}) \gs \limsup \TD^n(\mathbf{v}^n) + \rchi_{C^n}(\mathbf{v}^n).$$ If $\mathbf{v} \notin C$, any $\mathbf{v}^n \to \mathbf{v}$ gives the inequality. If $\mathbf{v} \in C$, then we first introduce
$$\mathbf{v}_\delta = \eta_\delta  \ast \mathbf{v}$$
where $\eta_\delta$ is a smooth convolution kernel of width $\delta$. Then, $\TD(\mathbf{v}_\delta) \xrightarrow{\delta \to 0} \TD(\mathbf{v})$ (\cite[Theorem 1.3]{AnzGia80}, noticing that $\mathbf{v}$ is constant around $\partial [0,1]^3$) and, thanks to \eqref{eq:discompact}, if $\delta \ls \frac 1n,$ we have $\rchi_{C^n}(\mathbf{v}_\delta) = 0.$
One can define $\mathbf{v}^n$ by
$$(\mathbf{v}_\delta^n)^{ijk} = \fint_{R_{ijk}^n} \mathbf{v}_\delta, $$
that satisfies $\rchi_{C^n}(\mathbf{v}^n) = 0$, and compute, assuming that $(\mathbf{v}_\delta^n)^{i+1,j,k}-(\mathbf{v}_\delta^n)^{i,j,k} \gs 0$
$$\frac{(\mathbf{v}_\delta^n)^{i+1,j,k}-(\mathbf{v}_\delta^n)^{i,j,k}}{1/n} = \left(\frac 1n\right)^{-1} \fint_{R_{ijk}^n} \mathbf{v}_\delta(x+\frac 1n,y,z) - \mathbf{v}_\delta(x,y,z) \gs \inf_{R_{ijk}^n \cup \big( R_{ijk}^n+(\frac 1n,0,0) \big) } \vert \partial_x \mathbf{v}_\delta \vert . $$
Then since $\mathbf{v}_\delta \in \mathcal C^1$, it is clear that the right hand side converges to $|\partial_x \mathbf{v}_\delta|$.
Note that in the 'upwind' gradient of a smooth function, only one term by direction can be active, then it is also true for $\mathbf{v}^n_\delta$ if $n$ is large enough and therefore $\TD^n(\mathbf{v}^n_\delta) \xrightarrow{n \to \infty} \TD(\mathbf{v}_\delta).$

It remains to take into account the divergence constraint. First we notice that since the convolution with $\eta_\delta$ commutes with weak derivatives, we have that $\div \mathbf v_\delta = 0$. On the other hand, even if $\div \mathbf v = 0$, the functions $\mathbf v^n_\delta$ do not necessarily satisfy $\div^n_C \mathbf v_\delta^n = 0$ exactly. Therefore, we replace them with their discrete divergence-free projection $P \mathbf v_\delta^n$ as defined in Lemma \ref{lem:discdivproj}. If we can prove that $\TD^n(P\mathbf{v}^n_\delta) \xrightarrow{n \to \infty} \TD(\mathbf{v}_\delta)$ as well, we can then conclude by a diagonal argument in $n$ and $\delta$. Now, since $\div \mathbf{v}_\delta = 0$ we have for the centered finite differences that $\|\div^n_C \mathbf v_\delta^n\|_\infty \ls C n^{-2}$ so that the estimate \eqref{eq:projest} becomes
\[\|P \mathbf v_\delta^n - \mathbf v_\delta^n\|_2 \ls C \| \div^n_C \mathbf v_\delta^n \|_2 \ls C n^{3/2} \| \div^n_C \mathbf v_\delta^n \|_\infty \ls C n^{-1/2},\]
where we have used that there are $Cn^3$ points in the grid. For the discrete total deformation we have
\begin{equation}\begin{aligned}\label{eq:tdproj}\left|\TD^n(P\mathbf{v}^n_\delta) - \TD^n(\mathbf{v}^n_\delta)\right| &\ls Cn^{-3} \|\gsym(P\mathbf{v}^n_\delta) - \gsym(\mathbf{v}^n_\delta)\|_1 \ls Cn^{-3/2}\|\gsym(P\mathbf{v}^n_\delta) - \gsym(\mathbf{v}^n_\delta)\|_2 \\ &\ls Cn^{-1/2} \|P\mathbf{v}^n_\delta - \mathbf{v}^n_\delta\|_{2} \ls C n^{-1} \xrightarrow{n \to \infty} 0,\end{aligned}\end{equation}
as desired.
\end{proof}

\begin{cor}\label{cor:convmin}
Minimizers of the discrete minimization problem \eqref{eq:discmin} converge in $L^1$ to minimizers of the total deformation $\TD$ over $\mathbf \in \BD(\R^d)$ such that $\div \mathbf v = 0$ and $\mathbf v = v_0$ on $\Omega \setminus X$ and $\mathbf v = 0$ on $\R^d \setminus \Omega$.
\end{cor}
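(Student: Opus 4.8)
The plan is to obtain the corollary from Theorem \ref{thm:conv} by the standard variational consequence of $\Gamma$-convergence, the only genuine work being an equi-coercivity estimate that is uniform in the mesh size $n$. I would first record that the discrete problem \eqref{eq:discmin} does possess minimizers and that, for the purpose of locating them, it is the problem $\min(\TD^n + \rchi_{C^n} + \rchi_{\{\div^n_C = 0\}})$ appearing in Theorem \ref{thm:conv} (the prefactor $n^{-3}$ being a positive constant and the truncation $\vee\,0$ as well as the redundant constraint $\sum \mathbf v_3 = 0$ being immaterial at a minimizer). Existence is clear: the admissible set is a nonempty affine subspace of the finite-dimensional space $S^3$ --- nonemptiness is exactly what the projection of Lemma \ref{lem:discdivproj} provides, applied to any $\mathbf v_0$ --- on which the objective is continuous, and it is coercive there by the estimate of the next paragraph.

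The heart of the argument is to show that any sequence $\mathbf v^n$ of discrete minimizers is precompact in $L^1((0,1)^3)$. I would identify $\mathbf v^n$ with its cellwise-constant extension $\overline{\mathbf v}^n \in \BD$ on the cubes $R_{ijk}^n$, extended by zero outside $[0,1]^3$ (consistent since $\mathbf v^n = 0$ on $G^n \setminus \Omega^n$), and prove the mesh-uniform comparison $\TD(\overline{\mathbf v}^n) \ls C\,\TD^n(\mathbf v^n)$. Here the presence in $\gsym^n$ of both forward and backward differences, together with the positive-part truncation defining $\TD^n$, is exactly what controls the jump part of $\dot{\bs\gamma}(\overline{\mathbf v}^n)$ across cell faces; this is essentially the computation already carried out for the $\liminf$ inequality in the proof of Theorem \ref{thm:conv}. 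The continuous Poincar\'e--Korn-type inequality \cite[II.(1.20)]{Tem85}, applicable since $\overline{\mathbf v}^n$ vanishes near $\partial[0,1]^3$, then yields $\|\mathbf v^n\|_{L^1} = \|\overline{\mathbf v}^n\|_{L^1} \ls C\,\TD(\overline{\mathbf v}^n) \ls C\,\TD^n(\mathbf v^n)$. To see the right-hand side is bounded, I would test minimality of $\mathbf v^n$ against a fixed competitor: for any admissible $\mathbf v \in \BD_\diamond$ for the continuous problem and the recovery sequence $P\mathbf v_\delta^n$ built in the proof of Theorem \ref{thm:conv}, minimality gives $\TD^n(\mathbf v^n) \ls \TD^n(P\mathbf v_\delta^n) \to \TD(\mathbf v_\delta) < \infty$. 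Thus $\|\overline{\mathbf v}^n\|_{\BD}$ is bounded, and Theorem \ref{thm:compact} extracts a subsequence with $\overline{\mathbf v}^{n_k} \to \mathbf v_\ast$ in $L^1$ for some $\mathbf v_\ast \in \BD(\R^d)$.

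With equi-coercivity established, I would invoke the fundamental theorem of $\Gamma$-convergence: combined with the $\Gamma$-$L^1$ convergence of Theorem \ref{thm:conv}, it gives that every $L^1$ cluster point $\mathbf v_\ast$ of $(\mathbf v^n)$ minimizes the limit functional $\TD + \rchi_C + \rchi_{\{\div = 0\}}$ --- equivalently, $\mathbf v_\ast$ is divergence-free, equals $\mathbf v_0$ on $\Omega \setminus X$, vanishes on $\R^d \setminus \Omega$, and minimizes $\TD$ among all such fields --- and moreover that $\TD^n(\mathbf v^n) \to \TD(\mathbf v_\ast)$, which equals $Y_c$ by Theorem \ref{thm:relaxation}. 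Since every subsequence of $(\mathbf v^n)$ has a further subsequence converging in $L^1$ to such a minimizer, the full sequence converges (to \emph{a} minimizer; uniqueness of the minimizer is not claimed).

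The main obstacle is the mesh-uniform comparison $\TD(\overline{\mathbf v}^n) \ls C\,\TD^n(\mathbf v^n)$ --- and, built on it, the discrete Poincar\'e--Korn bound: relative to the anti-plane treatment of \cite{IglMerSch18b} one must now handle the vectorial symmetrized gradient, whose Frobenius norm couples all the components, which forces the upgrade already used for the $\liminf$ part of the proof of Theorem \ref{thm:conv}. By contrast the divergence constraint is comparatively benign here, since for this corollary it only needs to pass to the limit, which was handled in that same proof.
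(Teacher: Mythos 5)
Your proposal is correct in outline and follows the same skeleton as the paper's proof: equi-coercivity obtained by viewing the discrete minimizers as piecewise-constant $\BD$ fields (the paper simply identifies $\TD^n$ with the total deformation of that interpolant, by the way $\TD^n$ was defined, and then argues ``as in Theorem \ref{thm:profile}'', i.e.\ via the Poincar\'e-type inequality and Theorem \ref{thm:compact}); boundedness of $\TD^n(\mathbf{v}^n)$ by testing minimality against an explicit admissible competitor; and then the standard variational consequence of the $\Gamma$-convergence of Theorem \ref{thm:conv}. The one genuine difference is the competitor: the paper takes $P\mathbf{z}^n$, where $\mathbf{z}^n \in K^n$ equals $-(0,0,1)$ on all of $\Omega^n$ and $P$ is the projection of Lemma \ref{lem:discdivproj}, and bounds $\TD^n(P\mathbf{z}^n)$ by an estimate in the spirit of \eqref{eq:tdproj}; you instead reuse the recovery sequence $P\mathbf{v}^n_\delta$ of a fixed continuous admissible divergence-free field. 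Your route is legitimate and has the appeal of needing nothing beyond what the $\limsup$ construction already provides, but it requires one small repair: at fixed $\delta$ the averaged mollification need not lie in $K^n$ once $1/n < \delta$, since cells of $X^n$ come within distance of order $1/n$ of $\partial X$, where $\mathbf{v}_\delta$ is no longer exactly $-(0,0,1)$; so minimality cannot be tested against $P\mathbf{v}^n_\delta$ with $\delta$ fixed, and you should instead use the diagonal sequence $\delta=\delta(n)\ls 1/n$ from the proof of Theorem \ref{thm:conv}, along which $\TD^n(P\mathbf{v}^n_{\delta(n)})$ converges to $\TD(\mathbf{v})$ and in particular is bounded, which is all that is needed. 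Finally, your mesh-uniform bound $\TD(\overline{\mathbf{v}}^n)\ls C\,\TD^n(\mathbf{v}^n)$ plays the role of the paper's identification of $\TD^n$ with a continuous total deformation; calling it ``essentially the liminf computation'' is optimistic, since the liminf argument presupposes an $L^1$-convergent sequence and does not by itself yield a fixed-$n$ estimate (note also that the discrete symmetrized, truncated differences can vanish on sampled rigid rotations while the interpolant still has jump-type deformation), but this is at the same level of detail as the paper's own one-line assertion, so it does not constitute a gap relative to the published argument.
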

\begin{proof}
To see this, notice that $\TD^n$ is also the continuous total deformation of the corresponding piecewise constant function, so that proceeding as in Theorem \ref{thm:profile}, if we can check that $\TD^n( \mathbf{v}^n )$ is a bounded sequence, where $\mathbf{v}^n$ are discrete minimizers, up to a subsequence they converge to some limit $\mathbf{v}$ in the $L^1$ topology. To obtain this, just notice that $\TD^n( \mathbf{v}^n ) \ls \TD^n( P\mathbf{z}^n )$, where $\mathbf{z}^n \in K^n$ equals ${-}(0,0,1)$ on all nodes of $G^n \cap \Omega^n$ and $P$ is the projection of Lemma \ref{lem:discdivproj}. That $\TD^n( P\mathbf{z}^n )$ is a bounded sequence can be seen proceeding as in \eqref{eq:tdproj}, since $\TD^n(\mathbf z_ n) = 0$. Standard results in $\Gamma$-convergence then imply that the limit is a minimizer of the $\Gamma$-limit, see \cite{Bra02}.
\end{proof}

\end{document}